\providecommand{\algorithmname}{Algorithm}
\theoremstyle{remark}
\newtheorem{theorem}{Theorem}
\newtheorem{lemma}{Lemma}
\newtheorem{definition}{Definition}
\newtheorem{step}{Step}
\newtheorem{note}{Note}
\theoremstyle{remark}
\newtheorem{example}{Example}
\title{A Field-Size Independent Code Construction for Groupcast Index Coding Problems}
\begin{document}
\author{Mahesh~Babu~Vaddi~and~B.~Sundar~Rajan\\ 
 Department of Electrical Communication Engineering, Indian Institute of Science, Bengaluru 560012, KA, India \\ E-mail:~\{vaddi,~bsrajan\}@iisc.ac.in }
\maketitle
\begin{abstract} 
The length of an optimal scalar linear index code of a groupcast index coding problem is equal to the minrank of its side information hypergraph. The side-information hypergraph becomes a side-information graph for a special class of groupcast index coding problems known as unicast index coding problems. The computation of minrank is an NP-hard problem. There exists a low rank matrix completion method and clique cover method to find suboptimal solutions to the index coding problem represented by a side-information graph. However, both the methods  are NP-hard. The number of computations required to find the minrank depends on the number of edges present in the side-information graph. In this paper, we define the notion of minrank-critical edges in a side-information graph and derive some properties of minrank, which identifies minrank-non-critical edges. Using these properties  we present a method for reduction of the given minrank computation problem into a smaller problem. Also, we give an heuristic algorithm to find a clique cover of the side-information graph by using  some binary operations on the adjacency matrix of the side-information graph. We also give a method to convert a groupcast index coding problem into a single unicast index coding problem. Combining all these results, we construct index codes (not necessarily optimal length) for groupcast index coding problems. The construction technique is independent of field size and hence can  be used to construct index codes over binary field. In some cases the constructed index codes are better than the best known in the literature both in terms of the length of the code and the minimum field size required.
\end{abstract}
\section{Introduction}
\label{sec:Introduction}
\IEEEPARstart{A}n index coding problem \cite{ISCO}, comprises of a transmitter that has a set of $K$ messages $\{ x_1,x_2,\ldots,x_K\}$, and a set of $m$ receivers $\{ R_1,R_2,\ldots,R_m\}$. Each receiver, $R_k=(\mathcal{K}_k,\mathcal{W}_k)$, knows a subset of messages, $\mathcal{K}_k \subset X$, called its \textit{side-information}, and demands another subset of messages, $\mathcal{W}_k \subseteq \mathcal{K}_k^\mathsf{c}$, called its \textit{Want-set}. The transmitter can take cognizance of the side-information of the receivers and broadcast coded messages, called the index code, over a noiseless channel. The objective is to minimize the number of coded transmissions, called the length of the index code, such that each receiver can decode its demanded message using its side-information and the coded messages. 


An index coding problem with no restrictions on want-set and side-information is called a groupcast index coding problem.  Without loss of generality a groupcast index coding problem with $m$ receivers and want-set $\mathcal{W}_k$ for $k \in [1:m]$ can be converted into another groupcast index coding problem with  $\sum_{k \in [1:m]} \mathit{\vert\mathcal{W}_{k}\vert}$ receivers such that every receiver wants exactly one message. A groupcast index coding problem with $K$ messages $\{x_1,x_2,\ldots,x_K\}$ can be represented by a hypergraph $\mathcal{H}$ with $K$ vertices $\{x_1,x_2,\ldots,x_K\}$ and $\sum_{k \in [1:m]} \mathit{\vert\mathcal{W}_{k}\vert}$ number of hyperedges \cite{ECIC}.

Consider a groupcast index coding problem with $K$ messages, $m$ receivers each wanting one message and side information hypergraph $\mathcal{H}$. Let $$\mathbf{e}_k=(\underbrace{0~0~~\ldots~0}_{k-1}~1~\underbrace{0~0~\ldots~0}_{K-k}) \in \mathbb{F}_q^K.$$ The support of a vector $\mathbf{u} \in \mathbb{F}_q^K$ is defined to be the set supp$(\mathbf{u})=\big\{k \in [1:K]: u_k \neq 0\big\}$. Let $\mathbf{E} \subseteq [1:K]$. We denote $\mathbf{u} \lhd \mathbf{E}$ whenever supp$(\mathbf{u}) \subseteq \mathbf{E}$. Then, the $\text{minrank}_q(\mathcal{H}$) over $\mathbb{F}_q$ is defined \cite{ECIC} as $\text{min}\{\text{rank}_{\mathbb{F}_q}(\{\mathbf{v}_k+\mathbf{e}_{k}\}_{k \in [1:m]}:\mathbf{v}_k \in \mathbb{F}_q^K, \mathbf{v}_k \vartriangleleft \mathcal{K}_k\}.$ 
In \cite{ECIC}, it was shown that for any given index coding problem, the length of an optimal scalar linear index code over  $\mathbb{F}_q$ is equal to the $\text{minrank}_q(\mathcal{H})$ of its side-information hypergraph. However, finding the minrank for any arbitrary side-information hypergraph is NP-hard \cite{ECIC}. There exists a low rank matrix completion method to find the rank of a binary matrix which is also  NP-hard \cite{minrank2}.


An index coding problem is unicast \cite{OMIC} if the demand sets of the receivers are disjoint. An index coding problem is called single unicast if the demand sets of the receivers are disjoint and every receiver wants only one message. Any unicast index problem can be equivalently reduced to a single unicast index coding problem (SUICP). In a single unicast index coding problem, the number of messages is equal to the number of receivers. 

Any SUICP with $K$ messages $\{x_1,x_2,\ldots,x_K\}$ can be expressed as a side-information graph $G$ with $K$ vertices $\{x_1,x_2,\ldots,x_K\}$. In $G$, there exists an edge from $x_i$ to $x_j$ if the receiver wanting $x_i$ knows $x_j$. In a unicast index coding problem with $K$ messages and $K$ receivers, the side-information graph has $\sum_{k \in [1:K]} \mathit{\vert\mathcal{K}_{k}\vert}$ number of edges. A matrix $\mathbf{A}=(a_{i,j})$ fits $G$ if $a_{i,i}=1$ for all $i$ and $a_{i,j}$=0 whenever $(i,j)$ is not an edge of $G$. Let $rk_{q}(\mathbf{A})$ denote the rank of this matrix over $\mathbb{F}_q$. The $minrank_{q}(G)$ is defined as 
\begin{align*}
minrank_{q}(G) \triangleq min\{rk_{q}(\mathbf{A}) : \mathbf{A} \ fits  \ G\}.
\end{align*}

In a side-information graph, if receiver $R_i$ knows $x_j$ and receiver $R_j$ knows $x_i$, then the vertices $x_i$ and $x_j$ in the side-information graph are connected with an undirected edge. The undirected edges in the side-information graph contribute towards cliques in the side-information graph. All the receivers which want a message symbol in a clique can be satisfied by one index code symbol which is the XOR of all message symbols present in the clique. All the receivers which wants a message symbol in a cycle of length $k$ can be satisfied by $k-1$ index code symbols.

As finding the minrank of a side-information graph is NP-hard, many researchers have proposed heuristic methods to solve the minrank problem. Birk \textit{et al.} \cite{ISCO} proposed least difference greedy (LDG) clique cover algorithm to find the cliques in the side-information graph. LDG algorithm works by computing all the possible distances between the rows of the fitting matrix. Kwak \textit{et al.} \cite{eldg} proposed extended least difference greedy (ELDG) clique cover algorithm to find the cliques in the side-information graph. ELDG algorithm works by computing all possible distances between the rows and columns of the fitting matrix. ELDG algorithm also gives a method to find directed cycles of length three. Awais \textit{et. al}\cite{cycle2} proposed an algorithm to piggyback a message which is sparsely connected to cycles of length $k$ on the given graph $G$.  

For a side-information graph $G$ with $K$ vertices, the adjacency matrix $\mathbf{A}=(a_{i,j})$ is a binary square matrix of size $K\times K$ with $1$ at the $(i,j)$th position if there exist a directed edge from $x_i$ to $x_j$ and $0$ else. Note that an undirected edge from $x_i$ to $x_j$ implies that there exists two directed edges, one from $x_i$ to $x_j$ and the other from $x_j$ and $x_i$. Hence, in the side-information graph if there exists an undirected edge from $x_i$ to $x_j$, there exist $1$ in $(i,j)$th and $(j,i)$th positions in the adjacency matrix. For any positive integer $n$, the $n$th power of the adjacency matrix $\mathbf{A}$ gives some information about the paths of length $n$ in the graph $G$. The $(i,j)$th entry of the matrix $\mathbf{A}^n$ gives the number of paths of length $n$ from the vertex $x_i$ to $x_j$ \cite{cycle1}. Similarly, The $(k,k)$th entry of the matrix $\mathbf{A}^n$ gives the number of cycles of length $n$, which pass through $x_k$. These properties of adjacency matrix were used in \cite{cycle2} to give an algorithm to piggyback a message which is sparsely connected to cycles of length $n$ on the given side-information graph $G$.

In the given side-information graph $G$ with $K$ vertices, if there exists no cycles of length less than or equal to $n-1$ for some positive integer $n$, the presence of a cycle of length $n$ can be determined by computing $\mathbf{A}^n$. In $\mathbf{A}^n$, if all diagonal elements are zero, this implies that there exists no cycles of length $n$.  

Index coding is motivated by wireless broadcasting applications where the side-information may be a random quantity. Let $p$ be the probability that the receiver $R_k$, $k \in [1:K]$ knows the message $x_j$ as side-information for $j\in [1:K]\setminus k$. Then, $G$ is a random graph with vertices $\{x_1,x_2,\ldots,x_K\}$, such that each edge between any two vertices occurs with probability $p$, independently of all other edges. 
The size of cliques and cycles in random graphs were extensively studied in the literature. Grimmett \textit{et. al.} \cite{randomgraph2} proved that as the number of vertices in $G$ tends to infinity, the size of maximum clique in $G$ would of $\frac{2log_2(K)}{log_2(\frac{1}{p})}$ with probability one. Being the size of the cliques is small (order of $log_2(K)$), heuristic algorithms may be very useful to give a solution to the index coding problem. In this paper, we give a heuristic approach to find the clique cover and a heuristic method to convert the groupcast index coding problem into a unicast index coding problem.  

In a given index coding problem with side-information graph $G$, an edge $e$ is said to be critical if the removal of $e$ from $G$ strictly reduces the capacity region. The index coding problem $G$ is critical if every edge $e$ is critical. Tahmasbi \textit{et al.} \cite{TSG} studied critical graphs and analyzed some properties of critical graphs with respect to capacity region. 

In this paper, we analyze properties of minrank by defining the notion of minrank-critical edges. 
\begin{definition}
In a given index coding problem with side-information graph $G$, an edge $e$ is said to be minrank-critical if the removal of $e$ from $G$ strictly increases the minrank of the graph $G$. 
 An edge $e\in E$ is said to be minrank-non-critical if the removal of $e$ from $G$ does not change the minrank of the graph $G$. 
\end{definition}

The computation of minrank over binary field requires the computation of the rank of $2^{\sum_{k \in [1:K]} \mathit{\vert\mathcal{K}_{k}\vert}}$ number of binary matrices of size $K \times K$. Hence, identification of every minrank-non-critical edge can reduce the number of computations required to compute the minrank by half.

A directed graph $G$ with $K$ vertices is called $\kappa(G)$-partial clique \cite{ISCO} iff every vertex in $G$ knows atleast $(K-1-\kappa)$ messages as side-information and there exits atleast one vertex in $G$ which knows exactly $(K-1-\kappa)$ messages as side-information. For an index coding problem whose side information graph is a $\kappa(G)$-partial clique, maximum distance separable (MDS) code of length $K$ and dimension $\kappa+1$, over a finite field $\mathbb{F}_q$ for $q \geq K$, can be used as an index code. $\kappa(G)$-partial clique method provides a savings of $K-\kappa-1$ transmissions when compared with the naive technique of transmitting all $K$ messages. Tehrani \textit{et. al} in \cite{bipartiate} proposed a partition multicast technique to address the groupcast index coding problem. In the partition multicast, one divides the messages into partitions and consider each partition as a partial clique. The messages are partitioned in such a way that the sum of savings of all partitions are maximized. However, the proposed partition multicast technique is suboptimal and NP-hard and the required field size depends on the number of messages in the partition and the number of messages known to each receiver in the partition.


Throughout we assume a finite field with characteristic 2 and use the XOR operation for convenience. However the results are easily extendable to finite fields with any characteristic.

\subsection{Contributions}
The main contributions of this paper are summarized as follows.
\begin{itemize}
\item We give a method to construct index codes for groupcast index coding problems which is independent of field size. Partition multicast index codes is the best known in the literature for groupcast index coding problems and they do not exist for all fields. We give instances of groupcast index coding problem where the length of index code obtained by using proposed method is less than that of partition multicast. 
\item To give a method to construct index codes for groupcast index coding problem, we develop many tools to address single unicast index coding problems. We define the notion of minrank-critical edges in a side-information graph and derive some properties of minrank, which identify minrank-non-critical edges in a side-information graph. By using the properties of minrank, we give an algorithm to convert the given minrank computation problem into a smaller problem. We give a heuristic algorithm to find a clique cover of the side-information graph. We also give a sub-optimal method to convert a groupcast index coding problem into a single unicast index coding problem.
\end{itemize}

The remaining part of this paper is organized as follows. In Section \ref{sec2}, we derive some properties of the minrank of a side-information graph and give a method to reduce the complexity of the minrank computation problem. In Section \ref{sec4}, we give a method to construct index codes for groupcast index coding problems which works over every finite field. We conclude the paper in Section \ref{sec5}. In the Appendix we give a heuristic algorithm to find the clique cover of side-information graph. 


\section{Properties of minrank of a side-information graph}
\label{sec2}
In this section, we derive some properties of the minrank of the index coding problem. By using the derived properties, we provide a method to identify minrank-non-critical edges of a side-information graph. As the number of computations required to find exact value of the  minrank is exponential in the number of edges present in the side-information graph, identification of every minrank-non-critical edge can reduce the number of computations required to compute the minrank by half. 
\begin{lemma}
\label{lemma2}
Let $G$ be the side-information graph of an SUICP with $K$ messages. Let $G^{(k)}$ be the side-information graph after removing all the incoming and outgoing edges associated with a vertex $x_k$ for any $x_k \in V(G)$. Then, the minrank of $G^{(k)}$ is atmost one greater than the minrank of $G$.  
\end{lemma}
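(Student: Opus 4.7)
The plan is to prove the bound by an explicit construction: starting from an optimal matrix fitting $G$, modify only the $k$-th row and column so that the result still satisfies the zero pattern required by $G^{(k)}$, and then argue the rank increases by at most one.

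Concretely, let $\mathbf{A} = (a_{i,j})$ fit $G$ with $\text{rank}_q(\mathbf{A}) = \text{minrank}_q(G) = r$. Define $\mathbf{A}' = (a'_{i,j})$ by setting
$$a'_{i,j} = \begin{cases} 1 & \text{if } i = j = k, \\ 0 & \text{if } i = k \text{ or } j = k \text{ and } i \neq j, \\ a_{i,j} & \text{otherwise}. \end{cases}$$
First I would verify that $\mathbf{A}'$ fits $G^{(k)}$: the diagonal is preserved (entries for $i\neq k$ come from $\mathbf{A}$, the $(k,k)$ entry is forced to $1$); for a non-edge $(i,j)$ of $G^{(k)}$ with $i,j\neq k$, the entry $a'_{i,j}=a_{i,j}=0$ because $(i,j)$ is also a non-edge of $G$; and every entry in row $k$ or column $k$ off the diagonal is $0$ by construction, matching the fact that vertex $x_k$ has no incident edges in $G^{(k)}$.

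Next I would bound $\text{rank}_q(\mathbf{A}')$. Let $\mathbf{A}_{[k]}$ denote the $(K-1)\times(K-1)$ submatrix of $\mathbf{A}$ obtained by deleting row $k$ and column $k$; clearly $\text{rank}_q(\mathbf{A}_{[k]})\leq \text{rank}_q(\mathbf{A}) = r$. Permuting rows and columns of $\mathbf{A}'$ to move index $k$ to the last position (which does not change the rank) gives the block form
$$\begin{pmatrix} \mathbf{A}_{[k]} & \mathbf{0} \\ \mathbf{0}^T & 1 \end{pmatrix},$$
whose rank is $\text{rank}_q(\mathbf{A}_{[k]}) + 1 \leq r + 1$. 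Hence $\text{minrank}_q(G^{(k)}) \leq \text{rank}_q(\mathbf{A}') \leq r + 1 = \text{minrank}_q(G) + 1$.

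There is no real obstacle here; the only subtlety is recognizing that wiping out row $k$ and column $k$ except for the diagonal entry yields a direct-sum block form, so the rank penalty is exactly the single isolated $1$ on the diagonal. The argument works over any field $\mathbb{F}_q$, which is consistent with the field-size-independent viewpoint of the paper.
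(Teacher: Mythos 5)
Your proof is correct and takes essentially the same route as the paper: the paper bounds $\text{minrank}(G_k)\leq\text{minrank}(G)$ via the deleted-row-and-column submatrix and then views $G^{(k)}$ as the disjoint union of $G_k$ with the isolated vertex $x_k$, which is exactly your block matrix $\bigl(\begin{smallmatrix}\mathbf{A}_{[k]} & \mathbf{0}\\ \mathbf{0}^T & 1\end{smallmatrix}\bigr)$ made explicit. Your version is slightly more self-contained in that it exhibits the fitting matrix directly rather than invoking additivity of minrank over disjoint unions, but the underlying idea is identical.
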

\begin{proof}
The fitting matrix $\mathbf{A}$ of $G$ is a $K \times K$ matrix. Let $G_k$ be the induced subgraph of vertices $V(G)\setminus \{x_k\}$ in $G$. let $\mathbf{A}_k$ be the fitting matrix of $G_k$. The minrank of $G_k$ is defined as  
\begin{align*}
minrank_{q}(G_k) \triangleq min\{rk_{q}(\mathbf{A}_k) : \mathbf{A}_k~\text{fits in} \ G_k\}.
\end{align*}
\noindent
The matrix $\mathbf{A}_k$ is a $(K-1) \times (K-1)$ matrix which can be obtained from $\mathbf{A}$ by removing the row and column corresponding to $x_k$. Thus the minrank of $G_k$ can not be greater than the minrank of $G$. 

The graph $G^{(k)}$ is the union of $G_k$ and the isolated vertex $x_k$. The minrank of the isolated vertex is one and the minrank of a union of disjoint subgraphs is equal to the sum of the minrank of the subgraphs. Thus the minrank of $G^{(k)}$ is atmost one greater than the minrank of $G$.
\end{proof}

\begin{lemma}
\label{lemma50}
Consider the side-information graph $G$ in Fig. \ref{minrankfig2} in which $V(G)=V(G_1)\cup V(G_2) \cup V(G_3)$ and there are no edges between the vertex sets $V(G_1)$ and $V(G_3)$. Then, we have 
\begin{align*}
&\text{minrank}(G_1)+\text{minrank}(G_3) \leq \text{minrank}(G) \\& \leq \text{minrank}(G_1)+\text{minrank}(G_2)+\text{minrank}(G_3).
\end{align*}

\begin{figure}[h]
~~~~~~~~~~~~~~~~~\includegraphics[scale=0.4]{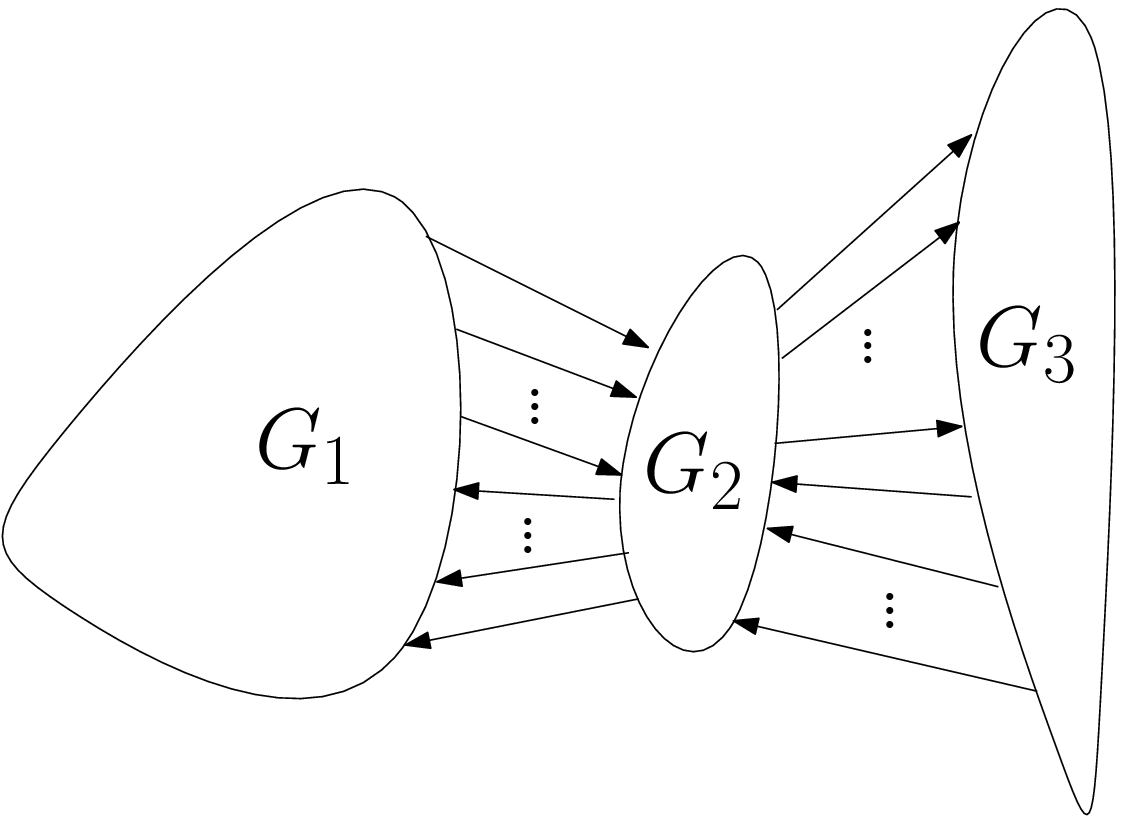}
\caption{}
\label{minrankfig2}
\end{figure}
\end{lemma}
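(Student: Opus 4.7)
The plan is to prove both inequalities by working directly with the block structure that the hypothesis ``no edges between $V(G_1)$ and $V(G_3)$'' forces on any fitting matrix of $G$. Order the vertices so that those of $V(G_1)$ come first, then $V(G_2)$, then $V(G_3)$. Any fitting matrix $\mathbf{A}$ of $G$ then has the block form
\begin{equation*}
\mathbf{A} = \begin{pmatrix} \mathbf{A}_{11} & \mathbf{A}_{12} & \mathbf{0} \\ \mathbf{A}_{21} & \mathbf{A}_{22} & \mathbf{A}_{23} \\ \mathbf{0} & \mathbf{A}_{32} & \mathbf{A}_{33} \end{pmatrix},
\end{equation*}
where the two zero blocks record the absence of edges between $V(G_1)$ and $V(G_3)$, and the diagonal blocks $\mathbf{A}_{11},\mathbf{A}_{22},\mathbf{A}_{33}$ are fitting matrices of $G_1,G_2,G_3$ respectively.

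For the upper bound I would exhibit one particular fitting matrix that achieves the claimed rank. Take $\mathbf{A}_{ii}$ to be a rank-minimizing fitting matrix of $G_i$ for $i=1,2,3$, and set all off-diagonal blocks to zero. The resulting block-diagonal matrix clearly fits $G$ (zero entries off the diagonal are allowed because setting non-edge entries to $0$ is always permissible), and its rank equals the sum of the ranks of the three diagonal blocks. This gives $\text{minrank}(G)\le \text{minrank}(G_1)+\text{minrank}(G_2)+\text{minrank}(G_3)$.

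For the lower bound I would let $\mathbf{A}$ be \emph{any} fitting matrix of $G$ and look at the principal submatrix indexed by the rows and columns in $V(G_1)\cup V(G_3)$. Because of the two zero blocks, this submatrix is exactly $\text{diag}(\mathbf{A}_{11},\mathbf{A}_{33})$, whose rank equals $\text{rank}(\mathbf{A}_{11})+\text{rank}(\mathbf{A}_{33})$. Since the rank of any submatrix is at most the rank of the whole matrix,
\begin{equation*}
\text{rank}(\mathbf{A})\ge \text{rank}(\mathbf{A}_{11})+\text{rank}(\mathbf{A}_{33})\ge \text{minrank}(G_1)+\text{minrank}(G_3),
\end{equation*}
where the last inequality uses that $\mathbf{A}_{11}$ and $\mathbf{A}_{33}$ fit $G_1$ and $G_3$. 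Minimizing over $\mathbf{A}$ yields the desired lower bound.

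There is no serious obstacle: the only subtlety is justifying the block decomposition, namely recognizing that the diagonal blocks are genuine fitting matrices of the induced subgraphs (so that their ranks are bounded below by the corresponding minranks) and that the two off-diagonal zero blocks really are forced by the hypothesis rather than chosen. Both points are immediate from the definitions of ``fits'' and of the induced subgraphs $G_1,G_2,G_3$, so the proof reduces to the two short arguments above.
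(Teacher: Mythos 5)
Your proof is correct and follows essentially the same route as the paper's: the lower bound comes from restricting to the principal submatrix on $V(G_1)\cup V(G_3)$, which is forced to be block-diagonal, and the upper bound from exhibiting a block-diagonal fitting matrix (equivalently, deleting all edges incident to $V(G_2)$). Your version is merely a more explicit matrix-level rendering of the paper's argument, and it correctly identifies the only point needing care, namely that the diagonal blocks are fitting matrices of the induced subgraphs.
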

\begin{proof}
The fitting matrix $\mathbf{A}$ of $G$ is a $V(G) \times V(G)$ matrix. Let $G_{13}$ be the induced subgraph of the vertices $V(G_1)\cup V(G_3)$ in $G$. Let $\mathbf{A}_{13}$ be the fitting matrix of $G_{13}$. The minrank of $G_{13}$ is defined as  
\begin{align*}
\text{minrank}(G_{13}) \triangleq min\{\text{rk}(\mathbf{A}_{13}) : \mathbf{A}_{13}~\text{fits in} \ G_{13}\}.
\end{align*}

The matrix $\mathbf{A}_{13}$ can be obtained from $\mathbf{A}$ by removing the rows and columns corresponding to the vertices in $V(G_2)$. Thus the minrank of $G_{13}$ can not be greater than the minrank of $G$. But the graph $G_{13}$ is the disjoint union of the graphs $G_1$ and $G_3$. Hence, we have
\begin{align}
\label{eq4}
\nonumber 
 \text{minrank}(G_{13})& =\text{minrank}(G_1)+\text{minrank}(G_3) \\& \leq \text{minrank}(G).
\end{align}
If we remove all the incoming and outgoing edges to $V(G_2)$ in $G$, then the resulting graph is a disjoint union of $G_1,G_2,G_3$ and hence the minrank of this resulting graph is the sum of the minrank of $G_1,G_2$ and $G_3$. Hence, the minrank of $G$ can not be more than the sum of minrank of $G_1,G_2$ and $G_3$. This along with \eqref{eq4} completes the proof.
\end{proof}
As a special case of Lemma \ref{lemma50} the following lemma is obtained.
\begin{lemma}
Consider the graph $G$ in Fig. \ref{minrankfig1} in which  $V(G)=V(G_1)\cup V(G_2) \cup x_k$ and there are no edges between the vertex sets $V(G_1)$ and $V(G_2)$. Then, we have
\begin{align*}
&\text{minrank}(G_1)+\text{minrank}(G_2) \leq \text{minrank}(G) \\& \leq \text{minrank}(G_1)+\text{minrank}(G_2)+1.
\end{align*}

\begin{figure}[h]
~~~~~~~~~~~~~~~~~\includegraphics[scale=0.4]{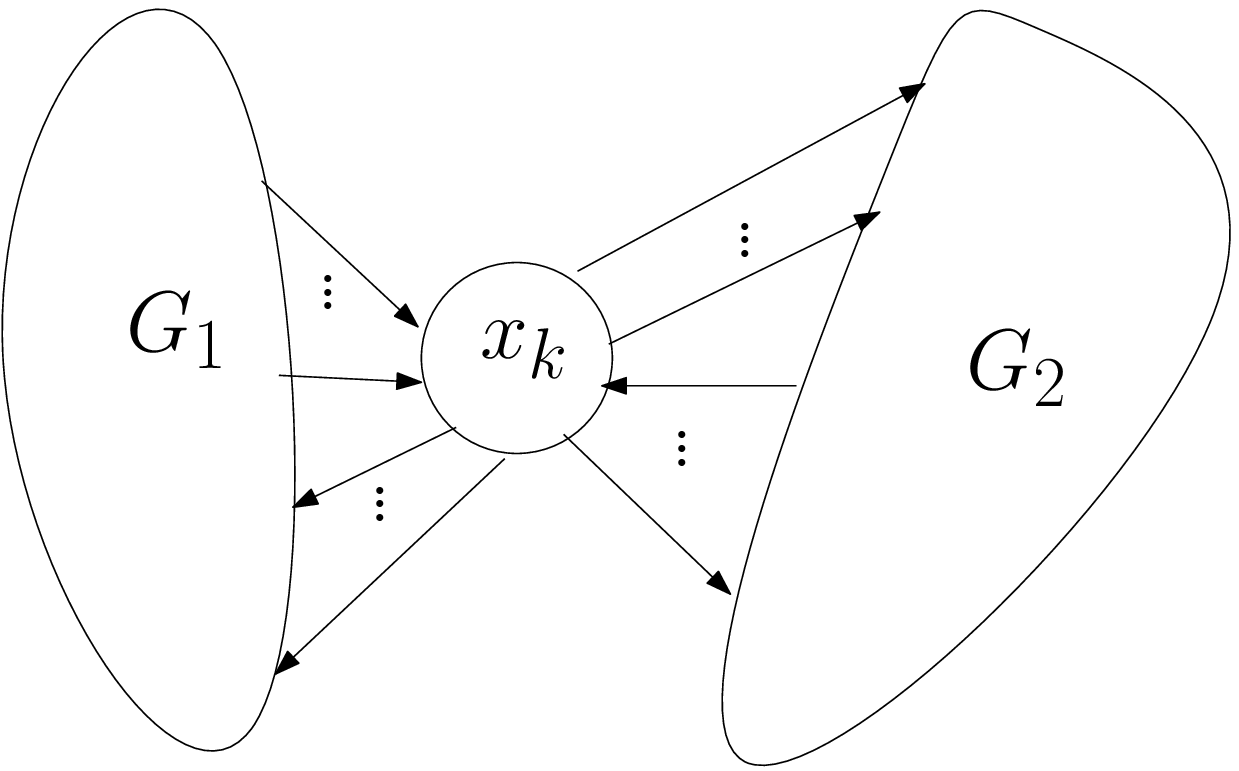}
\caption{}
\label{minrankfig1}
\end{figure}
\end{lemma}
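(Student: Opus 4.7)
The plan is to derive this lemma as a direct corollary of Lemma \ref{lemma50} by taking the middle vertex set to be the singleton $\{x_k\}$. Specifically, I would set (in the notation of Lemma \ref{lemma50}) the role of $V(G_3)$ to be played by the new $V(G_2)$, and the role of the middle block $V(G_2)$ to be played by the single-vertex set $\{x_k\}$. Under the hypothesis of the current lemma, there are no edges between $V(G_1)$ and $V(G_2)$, which is exactly the hypothesis required by Lemma \ref{lemma50} on the two outer blocks.

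Next I would observe that the side-information graph consisting of the single isolated vertex $x_k$ has minrank equal to $1$: any fitting matrix is the $1\times 1$ matrix $(1)$, whose rank over any field is $1$. Plugging this value into the two inequalities provided by Lemma \ref{lemma50} immediately yields the desired bounds
\begin{align*}
\text{minrank}(G_1)+\text{minrank}(G_2) &\leq \text{minrank}(G) \\
&\leq \text{minrank}(G_1)+\text{minrank}(G_2)+1,
\end{align*}
since the upper bound becomes $\text{minrank}(G_1)+\text{minrank}(\{x_k\})+\text{minrank}(G_2)$, and the middle term is $1$.

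There is really no hard step here; the only thing to be careful about is matching the labels to those of Lemma \ref{lemma50} correctly and checking that the edge-absence hypothesis between the two outer blocks carries over verbatim. The singleton-vertex minrank computation is immediate from the definition of a fitting matrix, since $a_{k,k}=1$ forces the $1\times 1$ fitting matrix to have rank one.
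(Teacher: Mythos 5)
Your proposal is correct and matches the paper exactly: the paper states this lemma without a separate proof, introducing it as ``a special case of Lemma \ref{lemma50}'' obtained by taking the middle block to be the singleton $\{x_k\}$, whose minrank is $1$. Your instantiation of the roles of the vertex sets and the singleton minrank computation is precisely the intended argument.
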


\begin{theorem}
\label{lemma21}
Let $G$ be a side-information graph and $G_k$ be the induced subgraph of $G$ with the vertex set $V(G)\setminus \{x_k\}$ for any $x_k \in V(G)$. If $x_k$ is not present in any directed cycle in $G$ and the minrank of $G_k$ is $m-1$, then the minrank of $G$ is $m$.  
\end{theorem}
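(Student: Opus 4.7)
The plan is to prove $\text{minrank}(G) = m$ by establishing matching upper and lower bounds, both coming from a block-triangular decomposition of the fitting matrix that is induced by the acyclicity of $x_k$. First I would partition $V(G)\setminus\{x_k\}$ into three sets: $P$, the vertices with a directed path to $x_k$; $S$, the vertices reachable from $x_k$; and $N$, the rest. Since $x_k$ lies on no cycle, $P \cap S = \emptyset$, and a short reachability argument rules out all edges in the ``reverse'' directions, namely $N \to P$, $N \to x_k$, $S \to N$, $S \to P$, $S \to x_k$, $x_k \to P$, and $x_k \to N$. Ordering the vertices as $P, N, \{x_k\}, S$ therefore makes every fitting matrix $\mathbf{A}$ of $G$ block upper triangular, with diagonal blocks $\mathbf{A}_{PP}, \mathbf{A}_{NN}, 1, \mathbf{A}_{SS}$; deleting the middle row and column gives the analogous block-triangular fitting matrix for $G_k$.

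Next I would apply the standard inequality $\text{rank}\begin{pmatrix} X & Y \\ 0 & Z \end{pmatrix} \geq \text{rank}(X) + \text{rank}(Z)$ iteratively. For $G$ this yields $\text{rank}(\mathbf{A}) \geq \text{rank}(\mathbf{A}_{PP}) + \text{rank}(\mathbf{A}_{NN}) + 1 + \text{rank}(\mathbf{A}_{SS})$, and because the off-diagonal blocks are unconstrained parameters that can be set to zero, minimizing over fitting matrices gives $\text{minrank}(G) \geq \text{minrank}(G_P) + \text{minrank}(G_N) + \text{minrank}(G_S) + 1$, where $G_P$, $G_N$, $G_S$ denote the induced subgraphs on $P$, $N$, $S$. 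The same argument applied to $G_k$ (without the middle block), combined with a block-diagonal fitting matrix for the upper bound, shows $\text{minrank}(G_k) = \text{minrank}(G_P) + \text{minrank}(G_N) + \text{minrank}(G_S)$. Substituting the hypothesis $\text{minrank}(G_k) = m-1$ then produces $\text{minrank}(G) \geq m$.

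For the matching upper bound I would exhibit the fitting matrix of $G$ that is block diagonal with the minrank-achieving $\mathbf{A}_{PP}, \mathbf{A}_{NN}, \mathbf{A}_{SS}$ on three of the diagonal blocks and a single $1$ at position $(k,k)$, all off-diagonal blocks being zero. This matrix fits $G$ and has rank $(m-1)+1=m$, so $\text{minrank}(G) \leq m$, and combined with the lower bound one concludes $\text{minrank}(G) = m$.

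The main obstacle is the first step, namely verifying that the non-cycle hypothesis actually rules out all seven classes of reverse edges listed above. Each case is a short reachability argument (for instance, an edge $S \to N$ would make the $N$-endpoint reachable from $x_k$, contradicting its membership in $N$; an edge $x_k \to P$ together with the defining path from $P$ to $x_k$ would close a cycle through $x_k$), but they must all be checked to obtain the clean block-triangular form. Once that structural claim is in place, the rank inequality for block-triangular matrices and the independence of the diagonal blocks (each can be optimized separately because the off-diagonal blocks are free) deliver the result cleanly.
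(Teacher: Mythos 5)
Your proof is correct, but it takes a genuinely different route from the paper's. The paper first invokes Lemma \ref{lemma2} to pin $\text{minrank}(G)$ down to $\{m-1,m\}$, and then rules out $m-1$ by contradiction: if the row $L_k$ of a rank-$(m-1)$ fitting matrix were a linear combination of rows $L_{i_1},\ldots,L_{i_t}$, the diagonal ones at positions $k,i_1,\ldots,i_t$ could only cancel if each of these vertices had an in-neighbour inside $\{x_k,x_{i_1},\ldots,x_{i_t}\}$, which forces a directed cycle through $x_k$. You instead extract the combinatorial content of ``$x_k$ lies on no cycle'' up front, as the reachability partition $P,N,S$ that makes every fitting matrix block upper triangular under the order $P,N,\{x_k\},S$, and then obtain both bounds from the block-triangular rank inequality plus a block-diagonal construction. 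Your version buys something: it yields the stronger additive decomposition $\text{minrank}(G)=\text{minrank}(G_P)+\text{minrank}(G_N)+\text{minrank}(G_S)+1$ and $\text{minrank}(G_k)=\text{minrank}(G_P)+\text{minrank}(G_N)+\text{minrank}(G_S)$, so acyclicity of $x_k$ alone gives $\text{minrank}(G)=\text{minrank}(G_k)+1$ without needing the hypothesis on $m$; and it sidesteps the one delicate step in the paper's argument, namely justifying that the cycle produced by cancelling the diagonal ones actually passes through $x_k$ (one must, e.g., start from $x_k$'s own diagonal entry and follow in-neighbours within the dependent set). Two small phrasing points: the lower bound does not rely on the off-diagonal blocks being ``settable to zero'' (that is only needed for the upper bound) but only on $\mathbf{A}_{PP}$, $\mathbf{A}_{NN}$, $\mathbf{A}_{SS}$ themselves being fitting matrices of the induced subgraphs; and of the seven forbidden edge classes you list, $N\to x_k$ is not actually required for upper-triangularity under your ordering, though it does hold.
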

\begin{proof}
From Lemma \ref{lemma2}, minrank of $G$ is either $m-1$ or $m$. Consider the case when the minrank of $G$ is $m-1$. That is, the row and the column corresponding to the vertex $x_k$ in the fitting matrix of $G$ are in the span of rows and columns corresponding to the remaining vertices of the graph respectively. Let $L_j$ be the row corresponding to $x_j$ in the fitting matrix of $G$ for any $j \in [1:K]$. Let $L_{k}$ be in the span of $L_{i_1},L_{i_2},\ldots,L_{i_t}$ for some $i_1,i_2,\ldots,i_t \in [1:K]\setminus \{k\}$. That is, 
\begin{align}
\label{lemma3eqn}
L_{k}+L_{i_1}+L_{i_2}+\ldots+L_{i_t}=\textbf{0}
\end{align}
The rows $L_{i_1},L_{i_2},\ldots,L_{t_t}$ and $L_k$ have $1$s in $i_1,i_2,\ldots,i_t$ and $k$th positions respectively ($\mathbf{A}(k,k)=\mathbf{A}(i_1,i_1)=\mathbf{A}(i_2,i_2)=\mathbf{A}(i_t,i_t)=1$). The linear dependence condition in \eqref{lemma3eqn} indicates that there exist atleast one more non zero element in $k,i_1,i_2,\ldots,i_t$ positions in $L_{i_1},L_{i_2},\ldots,L_{t_t}$ and $L_k$ such that the ones in the diagonal positions gets canceled. Note that a non zero element in $j$th position of $L_{s}$ for any $j,s \in [1:t]$ indicates that there is an edge from $x_{s}$ to $x_{j}$ in $G$. This implies that there exists a cycle $x_k \Rightarrow x_{i_1} \Rightarrow x_{i_2} \Rightarrow \ldots \Rightarrow x_{i_t} \Rightarrow x_{k}$. This is a contradiction to our assumption that there exists no cycle through $x_k$. Hence, the rank of $G$ is $m$.
\end{proof}
\begin{lemma}
\label{lemma60}
In the side-information graph $G$, if $x_k$ is not present in any directed cycle in $G$, then all the incoming and outgoing edges from $x_k$ are minrank-non-critical. 
\end{lemma}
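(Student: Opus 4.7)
The plan is to deduce the lemma as an immediate corollary of Theorem~\ref{lemma21}. Fix an edge $e$ of $G$ that is incident to $x_k$ (either incoming or outgoing) and let $G' = G \setminus \{e\}$. I want to show $\text{minrank}(G') = \text{minrank}(G)$. One inequality comes for free: any fitting matrix of $G'$ is also a fitting matrix of $G$, since removing an edge only \emph{adds} the constraint that the corresponding off-diagonal entry vanish, so the feasible set of fitting matrices shrinks and $\text{minrank}(G') \geq \text{minrank}(G)$.

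For the reverse inequality I would invoke Theorem~\ref{lemma21} on both $G$ and $G'$. Two structural observations make this possible. First, because $e$ is incident to $x_k$, it is absent from the induced subgraph on $V(G)\setminus\{x_k\}$; hence $G'_k = G_k$ and, in particular, $\text{minrank}(G'_k) = \text{minrank}(G_k)$. Second, deleting an edge cannot create new directed cycles, so the hypothesis that $x_k$ lies on no directed cycle of $G$ is inherited by $G'$. Letting $m = \text{minrank}(G_k) + 1$, Theorem~\ref{lemma21} applied to $G$ yields $\text{minrank}(G) = m$, while applied to $G'$ it yields $\text{minrank}(G') = \text{minrank}(G'_k) + 1 = m$, so $\text{minrank}(G') = \text{minrank}(G)$ and $e$ is minrank-non-critical.

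Because Theorem~\ref{lemma21} already does the substantive work, no serious obstacle is anticipated; the only care needed is to verify the two bookkeeping observations above, and to note that they depend only on $e$ being incident to $x_k$, which is why incoming and outgoing edges are handled uniformly. The same argument then applies to every edge incident to $x_k$, completing the proof.
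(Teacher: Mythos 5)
Your proof is correct and follows essentially the same route as the paper's: both reduce the claim to Theorem~\ref{lemma21} together with the observation that the induced subgraph on $V(G)\setminus\{x_k\}$ is untouched by the edge deletions. The only cosmetic difference is that you remove one edge at a time and apply Theorem~\ref{lemma21} to both $G$ and $G\setminus\{e\}$, whereas the paper removes all edges incident to $x_k$ at once and evaluates the result as the disjoint union of $G_k$ with the isolated vertex $x_k$; your variant matches the single-edge definition of minrank-non-critical slightly more directly.
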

\begin{proof}
Let $G_k$ be the induced subgraph with the vertex set $V(G)\setminus \{x_k\}$ in $G$. Let the minrank of $G_k$ be $m-1$. If $x_k$ is not present in any directed cycle in $G$, from Theorem \ref{lemma21}, the minrank of $G$ is $m$. Let $G^{(k)}$ be the graph after removing all the incoming and outgoing edges from $x_k$ in $G$. Hence, the graph $G^{(k)}$ is the union of $G_k$ and isolated vertex $\{x_k\}$. As the minrank of an isolated vertex is one, we have 
\begin{align*}
\text{minrank}(G^{(k)})=\text{minrank}(G_k)+1=m=\text{minrank}(G).
\end{align*}
Hence, removing all incoming and outgoing edges from $x_k$ does not reduce the minrank of $G$ and these edges are minrank-non-critical.
\end{proof}
Let $\mathbf{A}$ be the adjacency matrix of $G$ with $K$ vertices. From the properties of the  adjacency matrix, if $\mathbf{A}^t(k,k)$ is zero, then there exist no cycles in $G$ with length $t$ and which contains $x_k$. Hence the presence of $x_k$ in any directed cycle can be obtained from $\sum_{i=1}^{K} \mathbf{A}^i$. In $\sum_{i=1}^{K} \mathbf{A}^i$, if $(k,k)$th element is zero, this indicates that $x_k$ is not present in any directed cycle in $G$. Hence by computing $\sum_{i=1}^{K} \mathbf{A}^i$, one can identify all the vertices which are not present in any cycle.

\begin{example}
Consider the side-information graph $G$ given in Fig. \ref{minrankfig31}. The adjacency matrix $\mathbf{A}$ of $G$ and $\sum_{k=0}^7\mathbf{A^k}$ are shown below. The $(3,3)$ element in $\sum_{k=0}^7\mathbf{A^k}$ is zero indicates that the vertex $x_3$ is not present in any directed cycle. Hence, from Theorem \ref{theorem1}, all the incoming and outgoing edges from $x_3$ are minrank-non-critical. Let $\mathbf{A}_3$ be the matrix after deleting the third row and third column of $\mathbf{A}$. We have minrank($\mathbf{A}$)=minrank($\mathbf{A}_3$)+1 and we can compute the minrank of $\mathbf{A}$ by computing the minrank of $\mathbf{A}_3$.
\begin{figure}[h]
~~~~~~~~~~~~~~~~~\includegraphics[scale=0.5]{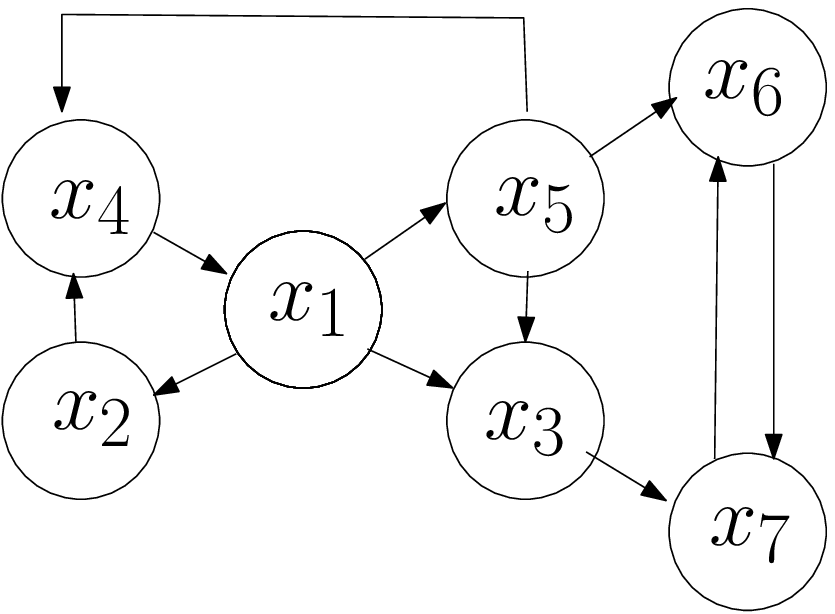}
\caption{}
\label{minrankfig31}
\end{figure}

\begin{small}
\arraycolsep=1pt
\setlength\extrarowheight{-2.0pt}
{
$$\mathbf{A}=\left[\begin{array}{*{20}c}
   0 & 1 & 1 & 0 & 1 & 0 & 0\\
   0 & 0 & 0 & 1 & 0 & 0 & 0\\
   0 & 0 & 0 & 0 & 0 & 0 & 1\\
   1 & 0 & 0 & 0 & 0 & 0 & 0\\
   0 & 0 & 1 & 1 & 0 & 1 & 0\\
   0 & 0 & 0 & 0 & 0 & 0 & 1\\
   0 & 0 & 0 & 0 & 0 & 1 & 0\\
  \end{array}\right],\sum_{k=0}^7\mathbf{A^k}=\left[\begin{array}{*{20}c}
   6 &~ 7~ & 8~ & 6~ & 7~ & 16~ & 14\\
   3 &~ 3~ & 6~ & 7~ & 3~ & 7~ & 7\\
   0 &~ 0~ & 0~ & 0~ & 0~ & 3~ & 4\\
   7 &~ 3~ & 6~ & 6~ & 3~ & 11~ & 12\\
   3 &~ 3~ & 7~ & 7~ & 3~ & 14~ & 13\\
   0 &~ 0~ & 0~ & 0~ & 0~ & 3~ & 4\\
   0 &` 0~ & 0~ & 0~ & 0~ & 4~ & 3\\
  \end{array}\right].$$
}
\end{small}

\end{example}
\begin{lemma}
\label{lemma3}
Let $\mathfrak{C}$ be the set of vertices in any clique of size $t$ in the graph $G$. Let $G_C$ be the side-information graph after removing all the incoming and outgoing edges associated with the $t$ vertices in $\mathfrak{C}$, i.e., $G_C=V(G)\setminus \mathfrak{C}$. Then, the minrank of $G_C$ is atmost one greater than the minrank of $G$.
\end{lemma}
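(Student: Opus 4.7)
The plan is to mimic and generalize the argument used in Lemma \ref{lemma2}, replacing the single isolated vertex $x_k$ with a clique of size $t$. The essential fact driving the proof is that a clique of any size contributes exactly $1$ to the minrank (its fitting matrix can be taken to be all-ones, which has rank one over any field). Hence, detaching the clique $\mathfrak{C}$ from the rest of the graph should inflate the minrank by at most that single unit.

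First, I would identify the structure of $G_C$. Because the edges between $\mathfrak{C}$ and $V(G)\setminus\mathfrak{C}$ are stripped away, $G_C$ decomposes as the disjoint union $\mathfrak{C}\sqcup G'$, where $G'$ is the induced subgraph of $G$ on the vertex set $V(G)\setminus\mathfrak{C}$. This is the clique analog of the decomposition $G^{(k)}=G_k\sqcup\{x_k\}$ used in Lemma \ref{lemma2}.

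Next I would invoke two facts already established earlier in the paper. The additivity of minrank over disjoint components (used in the proof of Lemma \ref{lemma50}) gives
\begin{equation*}
\text{minrank}(G_C)=\text{minrank}(\mathfrak{C})+\text{minrank}(G').
\end{equation*}
Since $\mathfrak{C}$ is a clique on $t$ vertices, its fitting matrix can be chosen as the all-ones $t\times t$ matrix, which has rank $1$, so $\text{minrank}(\mathfrak{C})=1$. Furthermore, because $G'$ is an induced subgraph of $G$, any fitting matrix of $G$ restricts (by deleting the $t$ rows and columns indexed by $\mathfrak{C}$) to a fitting matrix of $G'$ whose rank is at most the original, giving $\text{minrank}(G')\le\text{minrank}(G)$, exactly as argued for $G_k$ in Lemma \ref{lemma2}.

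Combining these pieces yields
\begin{equation*}
\text{minrank}(G_C)=1+\text{minrank}(G')\le 1+\text{minrank}(G),
\end{equation*}
which is the desired bound. The only genuinely non-trivial step is recognizing the clean disjoint-union structure of $G_C$ and that the clique's rank-one contribution is independent of $t$; once this is noticed, the remainder of the argument is a direct transcription of Lemma \ref{lemma2} with $\{x_k\}$ replaced by $\mathfrak{C}$. I do not anticipate significant obstacles beyond being careful about the book-keeping of which edges survive in $G_C$.
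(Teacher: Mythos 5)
Your proof is correct and takes essentially the same route as the paper's: the paper likewise decomposes $G_C$ into the disjoint union of the induced clique subgraph (whose minrank is one) and the induced subgraph on the remaining vertices, uses additivity of minrank over disjoint components, and bounds the latter's minrank by $\text{minrank}(G)$ via deleting the corresponding rows and columns of a fitting matrix. The only differences are notational ($G_1$, $G_2$ in the paper versus $\mathfrak{C}$, $G'$ in your write-up), so there is nothing further to reconcile.
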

\begin{proof}
Let $G_1$ be the subgraph induced by the $t$ vertices in the clique $\mathfrak{C}$ in $G$ and $G_2$ be the subgraph induced by the remaining vertices in $G$. We have $V(G)=V(G_1)\cup V(G_2)$. The minrank of $G_1$ is one as it is a clique. Hence, we have 
\begin{align*}
\text{minrank}(G) &\leq \text{minrank}(G_1)+\text{minrank}(G_2)=\text{minrank}(G_C)\\&= 1+\text{minrank}(G_2)\leq 1+\text{minrank}(G).
\end{align*}
The last inequality in the above equation follows from the fact that $G_2$ is a subgraph of $G$. This completes the proof.
\end{proof}
\begin{definition}
\label{def1}
Let $G$ be a side-information graph. Let $C_i=\{x_{i_1},x_{i_2},\ldots,x_{i_{\mid C_i \mid}}\}$ and $C_j=\{x_{j_1},x_{j_2},\ldots,x_{j_{\mid C_j \mid}}\}$ be two cliques in $G$. Let $V_R=V(G)\setminus (\{x_{i_1},x_{i_2},\ldots,x_{i_{\mid C_i \mid}}\}\cup \{x_{j_1},x_{j_2},\ldots,x_{j_{\mid C_j \mid}}\})$. We say that the cliques $C_i$ and $C_j$ are cycle-free if there exists atleast two vertices $x_k \in C_i$ and $x_{k^\prime} \in C_j$ such that there is no cycle consisting  of vertices only from a non-trivial subset of $\{x_k,x_{k^\prime}\}$ and any subset of $V_R$. 

\end{definition}

The three examples given below illustrate Definition \ref{def1}.
\begin{example} 
Consider the side-information graph $G_1$ given in Fig. \ref{minrankfig35}. In $G_1$, an undirected edge represents a clique of size two. In $G_1$, there exist two cliques $\{x_1,x_2,x_3\}$ and $\{x_4,x_5\}$. Let $V_R=V(G_1)\setminus (\{x_1,x_2,x_3\} \cup \{x_4,x_5\})=\{x_6,x_7,x_8,x_9,x_{10},x_{11}\}$. In $G_1$, every vertex in the clique $\{x_1,x_2,x_3\}$ is having an out going edge to every vertex in the clique $\{x_4,x_5\}$. The vertex $x_1$ is present in a cycle which comprises of vertices only from the set $V_R$ ($x_1 \rightarrow x_{10} \rightarrow x_{11} \rightarrow x_1$). Similarly,  the vertex $x_2$ is present in a cycle which comprises of vertices only  from the set $V_R$ ($x_2\rightarrow x_8 \rightarrow x_9\rightarrow x_2$). The vertex $x_3$ is not present in any cycle comprising of vertices only from the set $V_R$. In the clique $\{x_4,x_5\}$, both the vertices $x_4$ and $x_5$ are not present in any directed cycle  which comprises of vertices only from the set $V_R$. But, there exists a directed cycle comprising of $x_3$ from clique $\{x_1,x_2,x_3\}$ along with $x_5$ from clique $\{x_4,x_5\}$ and vertices only from the set $V_R$ ($x_3 \rightarrow x_5 \rightarrow x_6 \rightarrow x_7 \rightarrow x_3$). There also exists a directed cycle comprising of $x_3$ from clique $\{x_1,x_2,x_3\}$ along with $x_4$ from clique $\{x_4,x_5\}$ and vertices only from the set $V_R$ ($x_3 \rightarrow x_4 \rightarrow x_6 \rightarrow x_7 \rightarrow x_3$).

Hence, according to Definition \ref{def1}, the clique $\{x_1,x_2,x_3\}$ and the clique $\{x_4,x_5\}$ are not cycle-free. 
\begin{figure}[h]
~~~~~~~~~~~~~~~~~\includegraphics[scale=0.5]{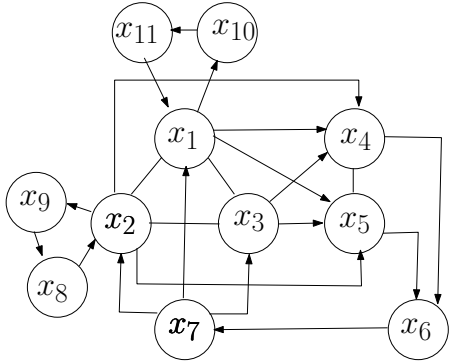}
\caption{Side-information graph $G_1$}
\label{minrankfig35}
\end{figure}
\end{example} 

\begin{example} 
Consider the side-information graph $G_2$ given in Fig. \ref{minrankfig36}. The graph in Fig. \ref{minrankfig36} is same as that of the graph in Fig. \ref{minrankfig35}, except one edge from $x_3$ to $x_5$ is removed. In $G_2$, there exist two cliques $\{x_1,x_2,x_3\}$ and $\{x_4,x_5\}$. In the graph $G_2$, the vertex $x_3$ in the clique $\{x_1,x_2,x_3\}$ is not present in any cycle comprising of vertices only from the set $V_R$.  The vertex $x_5$ in the clique $\{x_4,x_5\}$ is not present in any directed cycle which comprises of vertices only from the set $V_R$. There also does not exist a directed cycle comprising of $x_3$ from the clique $\{x_1,x_2,x_3\}$ along with $x_5$ from the clique $\{x_4,x_5\}$ and vertices only from the set $V_R$.

Hence, according to Definition \ref{def1}, the clique $\{x_1,x_2,x_3\}$ and the clique $\{x_4,x_5\}$ are cycle-free. 
\begin{figure}[h]
~~~~~~~~~~~~~~~~~\includegraphics[scale=0.5]{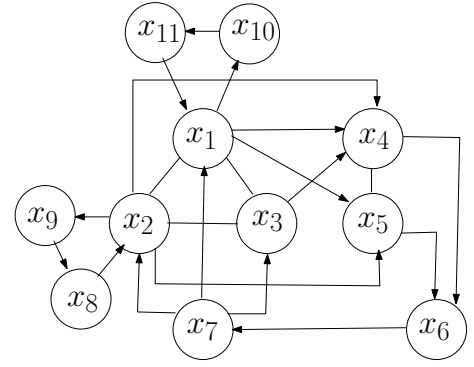}
\caption{Side-information graph $G_2$}
\label{minrankfig36}
\end{figure}
\end{example} 

\begin{example} 
Consider the side-information graph $G_3$ given in Fig. \ref{minrankfig37}. The graph in Fig. \ref{minrankfig37} is same as that of the graph in Fig. \ref{minrankfig35}, except the direction of one edge from $x_6$ to $x_7$ is reversed. In $G_3$, there exist two cliques $\{x_1,x_2,x_3\}$ and $\{x_4,x_5\}$ and every vertex in the clique $\{x_1,x_2,x_3\}$ is having an outgoing edge with every vertex of clique $\{x_4,x_5\}$. In the graph $G_3$, the vertex $x_3$ in the clique $\{x_1,x_2,x_3\}$ is not present in any cycle comprising of vertices only from the set $V_R$.  The vertex $x_5$ in the clique $\{x_4,x_5\}$ is not present in any directed cycle which comprises of vertices only from the set $V_R$. There also does not exist a directed cycle comprising of $x_3$ from the clique $\{x_1,x_2,x_3\}$ along with $x_5$ from the clique $\{x_4,x_5\}$ and vertices only from the set $V_R$.

Hence, according to Definition \ref{def1}, the clique $\{x_1,x_2,x_3\}$ and the clique $\{x_4,x_5\}$ are cycle-free. 
\begin{figure}[h]
~~~~~~~~~~~~~~~~~\includegraphics[scale=0.5]{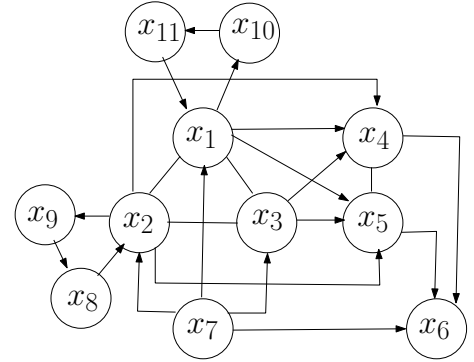}
\caption{Side-information graph $G_3$}
\label{minrankfig37}
\end{figure}
\end{example} 

Theorem \ref{theorem1} given below identifies the minrank-non-critical edges between cliques.

\begin{theorem}
\label{theorem1}
Let $G$ be a side-information graph. Let $C_i=\{x_{i_1},x_{i_2},\ldots,x_{i_{\mid C_i \mid}}\}$ and $C_j=\{x_{j_1},x_{j_2}\ldots,x_{j_{\mid C_j \mid}}\}$ be any two cliques in $G$ that are cycle-free. Then all the edges between $C_i$ and $C_j$ are minrank-non-critical.
\end{theorem}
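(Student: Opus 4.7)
The plan is to reduce the claim to a single equality. Let $\tilde{G}$ denote the graph obtained from $G$ by simultaneously deleting every edge with one endpoint in $C_i$ and the other in $C_j$. Since deleting any edge only tightens the constraints on a fitting matrix, minrank is monotone non-decreasing under edge deletion, so for every cross-edge $e$,
\[
\text{minrank}(G) \;\leq\; \text{minrank}(G\setminus e) \;\leq\; \text{minrank}(\tilde{G}).
\]
Thus it suffices to establish $\text{minrank}(\tilde{G}) = \text{minrank}(G)$; given this, the displayed chain collapses and every cross-edge $e$ is minrank-non-critical.

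For a lower bound on $\text{minrank}(G)$, I would pick the two vertices $x_k \in C_i$ and $x_{k'} \in C_j$ supplied by the cycle-free hypothesis and restrict attention to the induced subgraph $H := G[\{x_k,x_{k'}\}\cup V_R]$. By definition of cycle-free, neither $x_k$ nor $x_{k'}$ lies on any directed cycle inside $H$. Two successive applications of Lemma \ref{lemma60}, first stripping the edges incident with $x_k$ in $H$ and then those incident with $x_{k'}$ in the updated graph, reduce $H$ to the disjoint union of $G[V_R]$ with two isolated vertices, without altering the minrank. Additivity of minrank on disjoint unions (noted in the proof of Lemma \ref{lemma2}) yields $\text{minrank}(H) = \text{minrank}(G[V_R]) + 2$, and the induced-subgraph monotonicity of minrank (also established within the proof of Lemma \ref{lemma2}) gives $\text{minrank}(G) \geq \text{minrank}(H) = \text{minrank}(G[V_R]) + 2$.

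For the matching upper bound on $\text{minrank}(\tilde{G})$, I would invoke Lemma \ref{lemma50} with the decomposition $V(\tilde{G}) = C_i \cup V_R \cup C_j$, which is legitimate precisely because $\tilde{G}$ contains no edges between $C_i$ and $C_j$. The induced subgraphs on $C_i$ and $C_j$ remain cliques (only cross-edges were removed), each of minrank $1$, while the induced subgraph on $V_R$ is identical to $G[V_R]$; hence Lemma \ref{lemma50} gives $\text{minrank}(\tilde{G}) \leq 1 + \text{minrank}(G[V_R]) + 1 = \text{minrank}(G[V_R]) + 2$. Chaining with the lower bound produces $\text{minrank}(G) \geq \text{minrank}(G[V_R])+2 \geq \text{minrank}(\tilde{G}) \geq \text{minrank}(G)$, so $\text{minrank}(\tilde{G}) = \text{minrank}(G)$ as required.

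The main subtlety I expect is justifying the second application of Lemma \ref{lemma60}: after stripping the edges at $x_k$ in $H$, I need $x_{k'}$ to still be absent from every directed cycle of the updated graph. This is immediate, because edge deletion can only destroy cycles and never create them, so $x_{k'}$ inherits its cycle-freeness from $H$. Everything else is routine bookkeeping with Lemmas \ref{lemma2}, \ref{lemma50}, and \ref{lemma60}.
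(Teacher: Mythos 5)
Your proof is correct, and its skeleton is the same as the paper's: both arguments sandwich $\text{minrank}(G)$ and $\text{minrank}(\tilde{G})$ around the common value $\text{minrank}(G[V_R])+2$, with the upper bound on $\text{minrank}(\tilde{G})$ coming from subadditivity over the partition $C_i \cup V_R \cup C_j$ (the paper cites Lemma \ref{lemma3}, you cite Lemma \ref{lemma50}; either works since the cliques each have minrank $1$). Where you genuinely diverge is the lower bound. The paper keeps all of $C_i\cup C_j$ in play, sets $G_1=G[\{x_k,x_{k'}\}]$, $G_2=G[(C_i\cup C_j)\setminus\{x_k,x_{k'}\}]$, and first argues via Lemma \ref{lemma60} that the edges between $\{x_k,x_{k'}\}$ and $V_R$ are minrank-non-critical in $G$ so that Lemma \ref{lemma50} can be applied; this step is delicate, because $x_k$ may well lie on directed cycles of the \emph{full} graph $G$ that pass through other vertices of the cliques, so the hypothesis of Lemma \ref{lemma60} is not literally met there. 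You instead restrict first to the induced subgraph $H=G[\{x_k,x_{k'}\}\cup V_R]$, where the cycle-free condition of Definition \ref{def1} says exactly that $x_k$ and $x_{k'}$ lie on no directed cycle, apply Lemma \ref{lemma60} (equivalently Theorem \ref{lemma21}) twice inside $H$, and finish with induced-subgraph monotonicity $\text{minrank}(G)\ge\text{minrank}(H)$. This buys you a cleaner and more defensible justification of the same inequality $\text{minrank}(G)\ge\text{minrank}(G[V_R])+2$, at no extra cost; your observation that the second application of Lemma \ref{lemma60} is safe because edge deletion cannot create cycles closes the only loose end.
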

\begin{proof}
Let the number of edges between $C_i$ and $C_j$ be $\lambda$. We prove that all these $\lambda$ edges are minrank-non-critical. Let $\tilde{G}$ be the side-information graph after deleting $\lambda$ edges between $C_i$ and $C_j$ in $G$. We prove that the minrank of $G$ can not be less than the minrank of $\tilde{G}$. Let $V_R=V(G)\setminus (C_i \cup C_j)$. Let $G_3$ be the induced graph of $V_R$ in $G$. From Lemma \ref{lemma3}, we have 
\begin{align}
\label{eq2}
\text{minrank}(\tilde{G}) \leq \text{minrank}(G_3)+2.
\end{align}

We show that the minrank of $G$ can not be less than $\text{minrank}(G_3)+2$. Let $x_k \in C_i$ and $x_{k^\prime} \in C_j$ be the two vertices such that there is no cycle consists of vertices from non-trivial subset of $\{x_k,x_{k^\prime}\}$ and any subset of $V_R$. Let $G_1$ be the induced graph of the vertices $\{x_k,x_{k^\prime}\}$ in $G$. Let $G_2$ be the induced graph of the vertices $C_i \cup C_j \setminus \{x_k,x_{k^\prime}\}$ in $G$.  We have $V(G)=V(G_1)\cup V(G_2) \cup V(G_3)$.

From Definition \ref{def1} and Lemma \ref{lemma60}, all incoming and outgoing edges from $V(G_1)$ ($\{x_k,x_{k^\prime}\}$) to $V_R$ (vertices of $G_3$) are minrank-non-critical. Hence, the vertices in $G_1$ are connected to $G_3$ via $G_2$. Figure \ref{fig50} is useful to illustrate $G$. From Definition \ref{def1}, there exists no cycle among $x_k$ and $x_{k^\prime}$. Hence, we have minrank($G_1$)=2. From Lemma \ref{lemma50}, we have 
\begin{align}
\label{eq1}
\nonumber
\underbrace{2}_{\text{minrank}(G_1)}&+\text{minrank}(G_3)  \leq \text{minrank}(G) \\& \leq \underbrace{2}_{\text{minrank}(G_1)}+\text{minrank}(G_3)+\text{minrank}(G_2).
\end{align}

Hence, from \eqref{eq2} and \eqref{eq1}, we have 
\begin{align*}
\text{minrank}(\tilde{G})\leq 2+\text{minrank}(G_3) \leq \text{minrank}(G).
\end{align*}
This completes the proof. 

\begin{figure}[h]
~~~~~~~~~~~~~~~~~\includegraphics[scale=0.55]{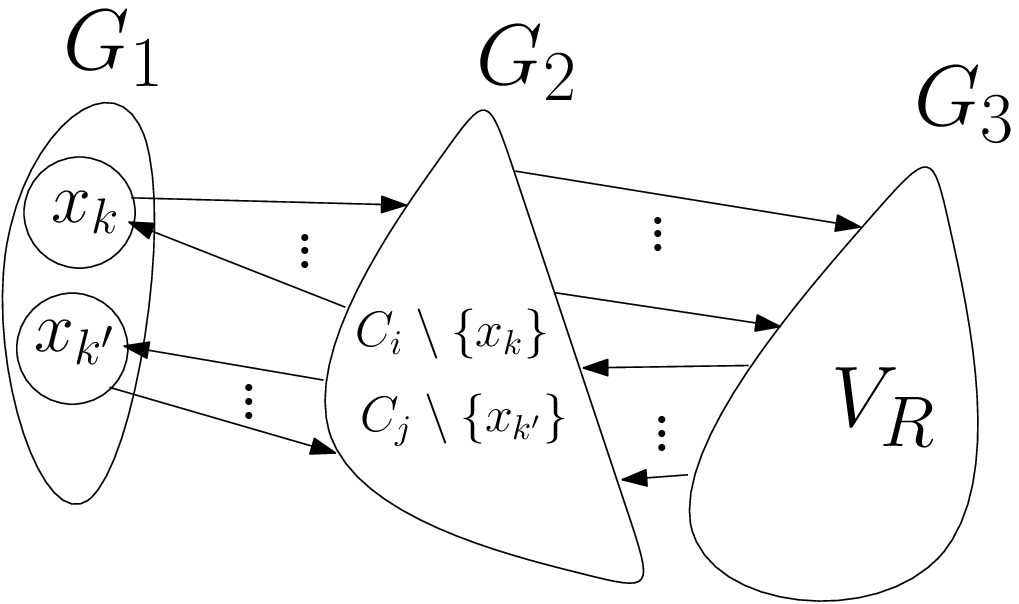}
\caption{}
\label{fig50}
\end{figure}

\end{proof}


Note that Definition \ref{def1} and Theorem \ref{theorem1} are also applicable if clique $C_i$ is a single vertex (trivial clique) or $C_j$ is a single vertex or both $C_i$ and $C_j$ are single vertices.

\begin{theorem}
\label{theorem2}
Let $G$ be a side-information graph with $K$ vertices $\{x_1,x_2,\ldots,x_K\}$. Let $\tilde{G}$ be the graph obtained from $G$ by the following reduction procedure:
\begin{enumerate}
\item[•] Find a set of cliques $\{C_1,C_2,\ldots,C_t\}$ in $G$ such that all the $t$ cliques partition $V(G)$. Note that any vertex is also a trivial clique of size one. If the cliques $C_i$ and $C_j$ are cycle-free, delete all the edges between $C_i$ and $C_j$ for every $i,j \in [1:t]$.
\end{enumerate}
Then,  $
\text{minrank}(G)=\text{minrank}(\tilde{G}).$ 
\end{theorem}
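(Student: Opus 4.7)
The plan is to prove Theorem~\ref{theorem2} by iterated application of Theorem~\ref{theorem1}. Enumerate the cycle-free pairs of cliques from the partition $\{C_1,\ldots,C_t\}$ as $(C_{i_1},C_{j_1}),(C_{i_2},C_{j_2}),\ldots,(C_{i_r},C_{j_r})$. Set $G_0=G$, and for each $\ell\in[1:r]$, let $G_\ell$ be the graph obtained from $G_{\ell-1}$ by deleting all edges between $C_{i_\ell}$ and $C_{j_\ell}$. Since the reduction procedure deletes exactly the edges between cycle-free pairs, one has $G_r=\tilde G$. The strategy is then to show $\text{minrank}(G_{\ell-1})=\text{minrank}(G_\ell)$ for every $\ell$ by invoking Theorem~\ref{theorem1} on the graph $G_{\ell-1}$.

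To apply Theorem~\ref{theorem1} at step $\ell$, two conditions must be verified. First, $C_{i_\ell}$ and $C_{j_\ell}$ must still be cliques in $G_{\ell-1}$. This is immediate because all previous deletions only removed edges between distinct cliques of the partition; no intra-clique edge is ever touched, so every $C_i$ remains a clique throughout the process. Second, the pair $(C_{i_\ell},C_{j_\ell})$ must still be cycle-free in $G_{\ell-1}$. By Definition~\ref{def1}, cycle-freeness in $G$ provides witness vertices $x_k\in C_{i_\ell}$ and $x_{k'}\in C_{j_\ell}$ such that no directed cycle in $G$ uses only a nontrivial subset of $\{x_k,x_{k'}\}$ together with vertices of $V_R=V(G)\setminus(C_{i_\ell}\cup C_{j_\ell})$. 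Since $G_{\ell-1}$ is a subgraph of $G$ on the same vertex set, every directed cycle in $G_{\ell-1}$ is also a directed cycle in $G$; hence the same witnesses certify cycle-freeness of $(C_{i_\ell},C_{j_\ell})$ in $G_{\ell-1}$.

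With these two conditions in place, Theorem~\ref{theorem1} applied to $G_{\ell-1}$ yields that all edges between $C_{i_\ell}$ and $C_{j_\ell}$ are minrank-non-critical in $G_{\ell-1}$. Combined with the trivial subgraph inequality $\text{minrank}(G_{\ell-1})\le\text{minrank}(G_\ell)$ (since more side information cannot increase minrank), this gives $\text{minrank}(G_\ell)=\text{minrank}(G_{\ell-1})$. Chaining these equalities across $\ell=1,\ldots,r$ produces the desired conclusion $\text{minrank}(G)=\text{minrank}(\tilde G)$.

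The only subtle point -- and the step I expect to require the most care in a written proof -- is the monotonicity of the cycle-free witness condition under edge deletions. Everything else (clique preservation, trivial-clique applicability via the remark following Theorem~\ref{theorem1}, and the subgraph bound on minrank) is essentially bookkeeping. No new combinatorial argument is needed beyond what Theorem~\ref{theorem1} already supplies; Theorem~\ref{theorem2} is really a global packaging of that local result.
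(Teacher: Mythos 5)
Your proof is correct and takes essentially the same route as the paper: iterated application of Theorem~\ref{theorem1} to each cycle-free pair of cliques in the partition. The paper's own proof is a two-line assertion that skips the point you rightly identify as the only subtle one --- that clique-hood and the cycle-free witness condition persist under the earlier edge deletions (the latter because every cycle in a subgraph is a cycle in the original graph) --- so your version is, if anything, the more complete argument.
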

\begin{proof}
Let $C_i$ and $C_j$ be two cycle-free cliques. From Theorem \ref{theorem1}, if $C_i$ and $C_j$ are cycle-free, then all the directed edges between $C_i$ and $C_j$ are minrank-non-critical. Hence, the construction procedure given in the construction of $\tilde{G}$ would not increase the minrank and we have   
$\text{minrank}(\tilde{G})=\text{minrank}(G).$
\end{proof}


Example \ref{ex6} given below illustrates Theorem \ref{theorem2}.
\begin{example} 
\label{ex6}
Consider the side-information graph $G_3$ given in Fig. \ref{minrankfig37}. In $G_3$, there exist two cliques $\{x_1,x_2,x_3\}$ and $\{x_4,x_5\}$ and every vertex in the clique $\{x_1,x_2,x_3\}$ is having an outgoing edge with every vertex of clique $\{x_4,x_5\}$. According to Definition \ref{def1}, the clique $\{x_1,x_2,x_3\}$ and the clique $\{x_4,x_5\}$ are cycle-free. Hence, from Theorem \ref{theorem1}, the six edges from clique $\{x_1,x_2,x_3\}$ to clique $\{x_4,x_5\}$ are minrank-non-critical. Graph $\tilde{G}_3$ after removing the six minrank-non-critical edges is shown in Fig. \ref{minrankfig38}. From Theorem \ref{theorem2}, the minrank of $G_3$ and $\tilde{G}_3$ is same.
\begin{figure}[h]
~~~~~~~~~~~~~~~~~\includegraphics[scale=0.5]{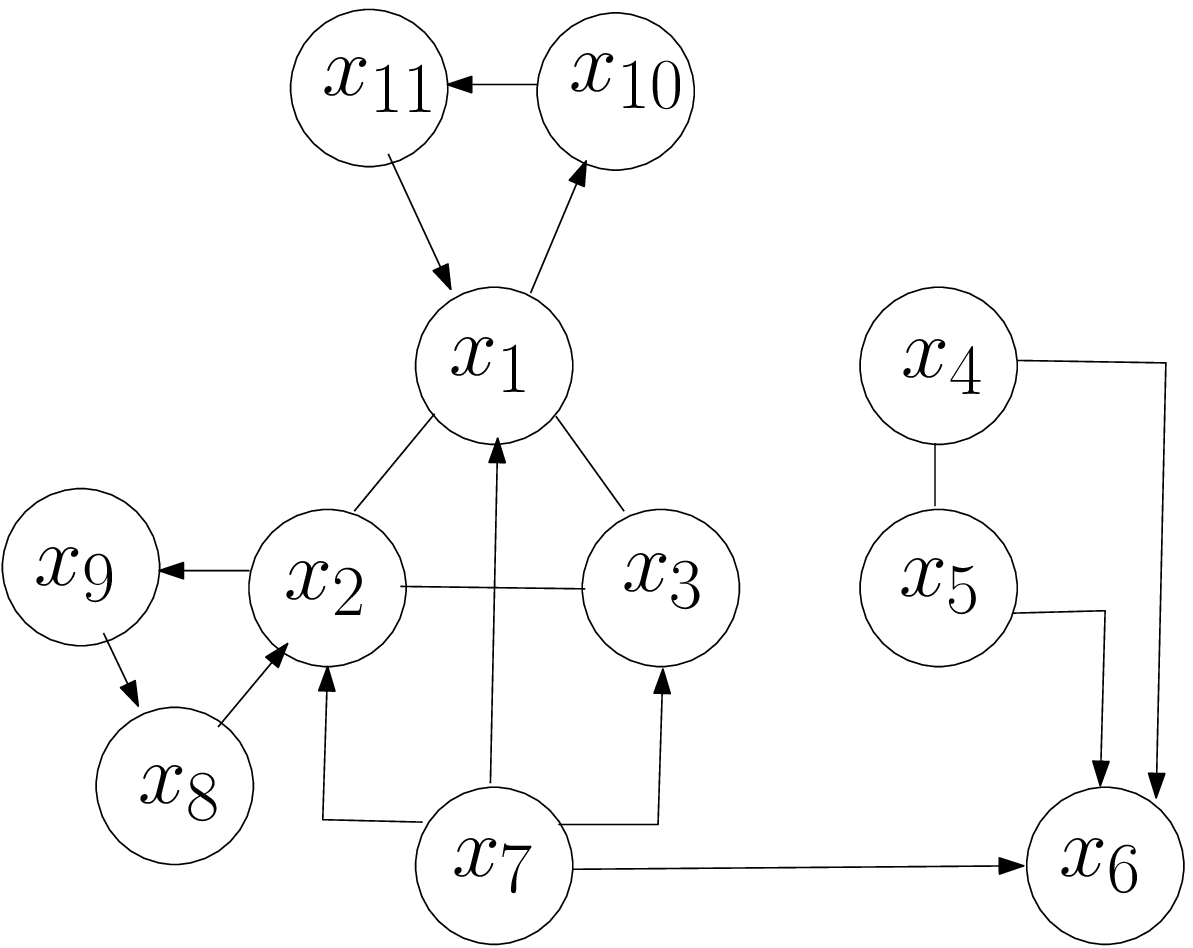}
\caption{}
\label{minrankfig38}
\end{figure}
\end{example} 
Theorem \ref{theorem2} reduces the minrank computation problem into a smaller problem in terms of number of edges (number of vertices remain same after reduction). Construction I given in next subsection reduces the minrank computation problem into a smaller problem in terms of both the number of vertices and the number of edges.

\subsection{A heuristic method to reduce the minrank computation problem}

In the following three steps, we give a heuristic approach to reduce the minrank computation problem into a smaller problem. We refer the following three steps as Construction I in the rest of the paper.

\noindent
{\bf Construction I}

\noindent
{\it Step 1:} Let $\tilde{G}$ be the graph obtained from Theorem \ref{theorem2} with $\{C_1,C_2,\ldots,C_t\}$ being the set of $t$ cliques in $G.$  These $t$ cliques partition $V(\tilde{G})=\{x_1,x_2,\ldots,x_K\}$ (the cliques need not satisfy the cycle-free condition). Note that any vertex is also a trivial clique of size one. Let $G_R$ be the graph obtained from $\tilde{G}$ after the following next two step: 

\noindent
{\it Step 2:}  Let $\{x_{i_1},x_{i_2},\ldots,x_{i_{|C_i|}}\}$ be the vertices in the $i$th clique for $i \in [1:t]$. If $\mid C_i \mid \geq 2$, combine these $|C_i|$ vertices into one new vertex $y_i$. Else, leave the vertex in $C_i$ as it is.

\noindent
{\it Step 3:}Now the number of vertices is equal to the number of cliques in $G$, that is $t$. 
If the number of directed edges from $C_i$ to $C_j$ in $\tilde{G}$ are $|C_i|.|C_j|,$ then introduce a directed edge from $y_i$ to $y_j$ for $i,j \in [1:t].$ Otherwise, there does not exist a directed edge from $y_i$ to $y_j$ for $i,j \in [1:t].$ 
 

~ \\

The reduction of minrank computation problem by using Theorem \ref{theorem2} and Construction I is summarized below.
\begin{align*}
&~~~~~~~G~~ \underbrace{\Rightarrow}_{\text{Theorem}~\ref{theorem2}}~~ \tilde{G}~ \underbrace{\Rightarrow}_{\text{Construction I}}~ G_R \\&
\text{minrank(G) = minrank}{(\tilde{G})}~ \leq~\text{minrank}(G_R).
\end{align*}

In Lemma \ref{lemma121}, we give a sufficient condition when the minrank of the graphs $G$ and $G_R$ are equal. The necessary and sufficient conditions that the side-information graph $G$ need to satisfy such that the minrank of $G$ is equal to the minrank of $G_R$ needs further investigation.

For a graph $G$, the order of an induced acyclic sub-graph formed by removing the minimum number of vertices in $G$, is called Maximum Acyclic Induced Subgraph ($MAIS(G)$). In \cite{ICSI}, it was shown that $MAIS(G)$ lower bounds the minrank of $G$. That is, 
\begin{align}
\label{mr1}
\text{minrank}(G) \geq MAIS(G).
\end{align}

\begin{lemma}
\label{lemma121}
Let $G$ be a side-information graph. Let $\{C_1,C_2,\ldots,C_t\}$ be a set of $t$ cliques in $G$ obtained in Theorem \ref{theorem2}. If every pair of these $t$ cliques are cycle-free, then the minrank of $G$ is equal to the minrank of $G_R$.
\end{lemma}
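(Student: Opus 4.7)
The plan is to chain together Theorem \ref{theorem2} and the definition of $G_R$ in Construction I, with the observation that both graphs collapse to the same ``disjoint union of cliques'' object from different directions. The cycle-free hypothesis on every pair of cliques is exactly the trigger that makes Theorem \ref{theorem2} kill every inter-clique edge, which is what forces both sides to reduce to the number $t$ of cliques.

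First I would apply Theorem \ref{theorem2} directly: since every pair $(C_i, C_j)$ of the $t$ cliques is cycle-free, Theorem \ref{theorem2} asserts that all inter-clique edges may be deleted without changing the minrank, so $\text{minrank}(G) = \text{minrank}(\tilde{G})$. The resulting graph $\tilde{G}$ is, by construction, the vertex-disjoint union of the cliques $C_1, C_2, \ldots, C_t$ with \emph{no} edges connecting distinct cliques. Using the additivity of minrank under disjoint union (already invoked in the proof of Lemma \ref{lemma50}) together with the fact that each $C_i$ is a clique and hence has minrank equal to $1$, I would then conclude
\begin{align*}
\text{minrank}(\tilde{G}) \;=\; \sum_{i=1}^{t} \text{minrank}(C_i) \;=\; t.
\end{align*}

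Next I would show $\text{minrank}(G_R) = t$ by inspecting Construction I. Since $\tilde{G}$ contains no edges between any $C_i$ and $C_j$ with $i \neq j$, the number of directed edges from $C_i$ to $C_j$ in $\tilde{G}$ is $0$, which is strictly less than $|C_i|\cdot|C_j| \geq 1$. Hence Step 3 of Construction I introduces no edge between the merged vertices $y_i$ and $y_j$ for any $i \neq j$. Thus $G_R$ consists of $t$ isolated vertices, and its minrank is $t$ (again by disjoint union additivity, or directly because its fitting matrix must be the $t \times t$ identity).

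Combining the two conclusions, $\text{minrank}(G) = \text{minrank}(\tilde{G}) = t = \text{minrank}(G_R)$, finishing the argument. There is no real obstacle here: the content lies entirely in Theorem \ref{theorem2}, and the lemma is essentially the observation that the cycle-free-everywhere hypothesis makes both reduction procedures produce the trivial ``$t$ isolated vertices / $t$ disjoint cliques'' picture, whose minrank is trivially $t$. The only thing to watch is the edge-count convention in Step 3 of Construction I, to be sure that ``zero inter-clique edges in $\tilde{G}$'' really does translate into ``no edge in $G_R$''; a one-line verification suffices.
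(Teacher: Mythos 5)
Your proof is correct, and it reaches the conclusion by a genuinely more direct route than the paper. The paper sandwiches the quantities: it first notes $\text{minrank}(G_R)\leq t$ (one transmission per clique/vertex of $G_R$), then uses Theorem \ref{theorem2} to pass to $G'=\tilde{G}$, then invokes the $\text{MAIS}$ lower bound of \eqref{mr1} --- picking one vertex from each clique of $G'$ gives an acyclic induced subgraph of order $t$, so $\text{MAIS}(G')\geq t$ --- and finally closes the chain $t\leq\text{MAIS}(G')\leq\text{minrank}(G)=\text{minrank}(G')\leq\text{minrank}(G_R)\leq t$ using the inequality $\text{minrank}(\tilde{G})\leq\text{minrank}(G_R)$ from Construction I. You instead compute both minranks exactly: after Theorem \ref{theorem2} deletes every inter-clique edge, $\tilde{G}$ is a disjoint union of $t$ cliques, so block-diagonal additivity (already invoked in the proof of Lemma \ref{lemma2}) gives $\text{minrank}(\tilde{G})=t$; and since $\tilde{G}$ has zero inter-clique edges, Step 3 of Construction I introduces no edge, so $G_R$ is $t$ isolated vertices with minrank $t$. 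What your version buys is the elimination of the $\text{MAIS}$ bound and of the inequality $\text{minrank}(\tilde{G})\leq\text{minrank}(G_R)$ altogether, plus the sharper intermediate fact that both graphs degenerate to the same trivial object; what the paper's version buys is an argument that only needs one-sided bounds (it never needs exact additivity of minrank over disjoint unions, only the chain of inequalities) and it showcases the $\text{MAIS}$ machinery introduced just before the lemma. Your one flagged caveat --- checking that zero inter-clique edges in $\tilde{G}$ yields no edge $y_i\to y_j$ in $G_R$ --- is indeed immediate from the edge-count test in Step 3 of Construction I.
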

\begin{proof}
As there exist $t$ cliques in $G_R$ and every clique requires one index code transmission, we have 
\begin{align}
\label{mr2}
\text{minrank}(G_R)\leq t.
\end{align}

Given every pair of $t$ cliques being cycle-free, from Theorem \ref{theorem2}, all the incoming and outgoing edges to every pair of cliques are minrank-non-critical. Let $G^\prime$ be the graph after removing all the incoming and outgoing edges from every pair of cliques. We have 
\begin{align}
\label{mr3}
\text{minrank}(G)=\text{minrank}(G^\prime).
\end{align}

In $G^\prime$, if we choose one vertex from each clique, then all these $t$ vertices form an acyclic induced subgraph.  Hence, we have 
\begin{align}
\label{mr4}
\text{MAIS}(G^\prime) \geq t.
\end{align}

By combining \eqref{mr1}-\eqref{mr4}, we have 
\begin{align*}
t \leq \text{MAIS}(G) & \leq \text{minrank}(G)=\text{minrank}(G^\prime)\\&  \leq \text{minrank}(G_R) \leq t.
\end{align*} 
This completes the proof.
\end{proof}


Lemma \ref{lemma122} given below gives the relation between the index code for $G_R$ and the index code for $G$.
\begin{lemma}
\label{lemma122}
Let $G$ be the side-information graph of a single unicast ICP. Let $G_R$ be the graph obtained from $G$ by using Construction I. An index code $\mathfrak{C}$ for the ICP represented by $G_R$ can be used as an index code for the ICP represented by $G$ after replacing $y_i$ with the XOR of vertices present in $C_i$ for $i \in [1:t]$ ($y_i$ and $C_i$ are defined in Construction I). 
\end{lemma}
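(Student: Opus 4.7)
The plan is to show that every receiver of the original ICP on $G$ can decode its wanted message from the transmitted (substituted) code, by reducing its decoding task to the decoding guarantee that $\mathfrak{C}$ already provides in $G_R$. I would set up notation first: fix a receiver $R_k$ in $G$ demanding $x_k$, and let $C_i$ be the clique in Construction I containing $x_k$. After substitution, the transmitted symbols are the elements of $\mathfrak{C}$ evaluated at $y_\ell \mapsto \bigoplus_{x\in C_\ell} x$ for $\ell\in[1:t]$. I need to exhibit an explicit decoding procedure that uses only these transmissions together with the side information $\mathcal{K}_k$ of $R_k$ in $G$.

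The key observation is how side information translates between the two graphs. By Step 3 of Construction I, there is a directed edge $y_i\to y_j$ in $G_R$ exactly when every vertex in $C_i$ has an edge to every vertex in $C_j$ in $\tilde G$ (equivalently in $G$, since Theorem \ref{theorem2} only deleted minrank-non-critical edges, but those deletions preserve the ``all $|C_i|\cdot|C_j|$ edges'' criterion used here; I would argue this cleanly by viewing Construction I as applied to $\tilde G$, so the index code is intended for the $\tilde G$-based reduction). Consequently, if $y_j$ is in the side information of the ``super-receiver'' demanding $y_i$ in $G_R$, then $x_k$ (being in $C_i$) has an edge in $\tilde G$, hence in $G$, to every vertex of $C_j$, so $R_k$ knows the entire clique $C_j$ as side information and can therefore compute $y_j = \bigoplus_{x\in C_j} x$ on its own. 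Moreover, since $C_i$ is a clique, $R_k$ also knows $C_i\setminus\{x_k\}$ in $G$.

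With these two facts the decoding goes in two stages. First, $R_k$ computes the values of every $y_j$ that appears in the side information of the $y_i$-receiver in $G_R$, as described above; then it runs the decoder that $\mathfrak{C}$ guarantees for that $y_i$-receiver on the received symbols, obtaining $y_i = \bigoplus_{x\in C_i} x$. Second, $R_k$ XORs $y_i$ with the messages in $C_i\setminus\{x_k\}$ (all known to it), recovering $x_k$. This works uniformly for every $R_k$, so the substituted $\mathfrak{C}$ is a valid index code for $G$.

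The only genuine subtlety I foresee is the minor bookkeeping issue mentioned above: Construction I is defined on $\tilde G$ (the output of Theorem \ref{theorem2}), not directly on $G$, so I should state at the start of the proof that it suffices to produce an index code for $\tilde G$ (since $G$ and $\tilde G$ have the same set of receivers and, by Theorem \ref{theorem2}, the side information of each receiver in $\tilde G$ is a subset of its side information in $G$). After that reduction the argument is a routine verification, and no calculations beyond the two XOR substitutions are needed.
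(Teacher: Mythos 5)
Your argument is correct and follows essentially the same route as the paper's own proof: the edge $y_i\to y_j$ in $G_R$ exists only when all $|C_i|\cdot|C_j|$ edges from $C_i$ to $C_j$ are present, so a receiver wanting $x_k\in C_i$ knows every clique $C_j$ in the out-neighbourhood of $y_i$, can compute those $y_j$'s, run the $G_R$-decoder to obtain $y_i$, and strip off $C_i\setminus\{x_k\}$ to recover $x_k$. Your added care about working in $\tilde G$ (whose side information is a subset of that in $G$) and about explicitly invoking that $C_i$ is a clique for the final XOR step only makes explicit what the paper leaves implicit.
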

\begin{proof}
In the ICP represented by $G$, receiver $R_j$ wants to decode $x_j$ for every $j \in [1:K]$. Let $x_j \in C_i$ for some $i \in [1:t]$. From Construction I, receiver $R_j$ knows all the messages corresponding to out neighbourhood of $y_i$ in $G_R$. Hence, $R_j$ computes $y_i$ from $\mathfrak{C}$ and from $y_i$ it computes $x_j$. 
\end{proof}

The following three examples illustrate Construction I.

\begin{example}
\label{ex98}
Consider the index coding problem represented by the side-information graph $G$ given in Fig. \ref{minrankfig3}. In $G$, the vertices $\{x_1,x_2,x_3\}$ form a clique of size three and the vertices $\{x_4,x_5\}$ form a clique of size two. In the graph $G$, there exists a directed edge from every vertex in the clique $\{x_1,x_2,x_3\}$ to every vertex in the clique $\{x_4,x_5\}$ except an edge from $x_2$ to $x_5$. From Construction I, we can combine the vertices $\{x_1,x_2,x_3\}$ into one vertex and the vertices $\{x_4,x_5\}$ into another vertex in the reduced side-information graph $G_R$. The reduced side-information graph $G_R$ is given in Fig. \ref{minrankfig4}. In this example, the minrank of the graph $G$ is four and the minrank of graph $G_R$ is four. The index code for the index coding problem represented by $G_R$ is $\{y_1,y_4,x_6,x_7\}$ and the index code for the index coding problem represented by $G$ is $\{\underbrace{x_1+x_2+x_3}_{y_1},~\underbrace{x_4+x_5}_{y_4},~x_6,~x_7\}$.
\begin{figure}[h]
~~~~~~~~~~~~~~~~~\includegraphics[scale=0.5]{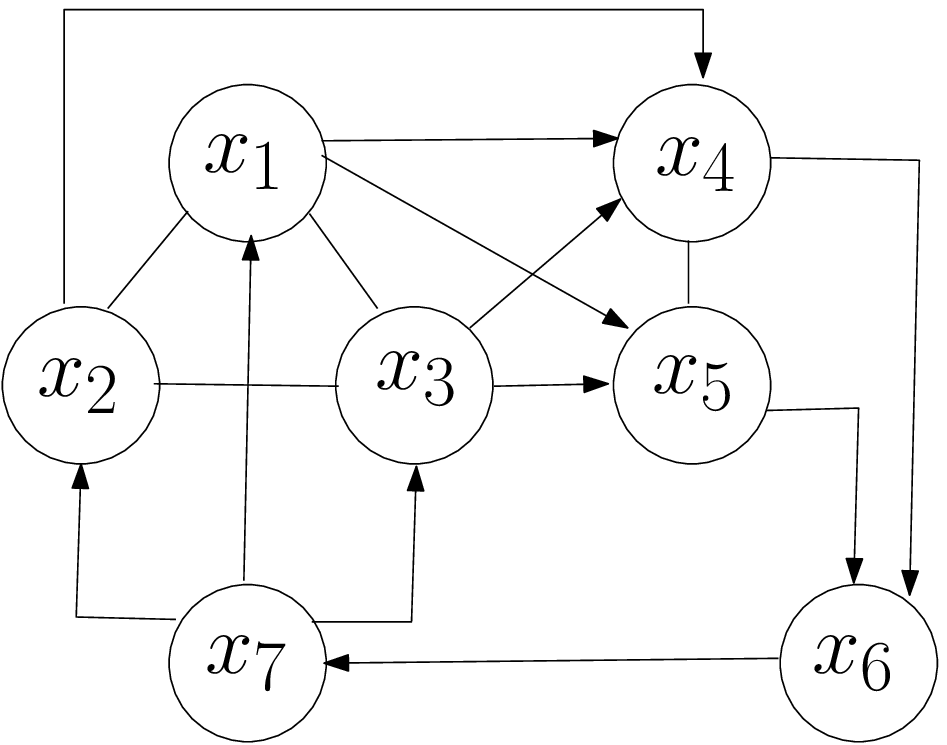}
\caption{}
\label{minrankfig3}
\end{figure}
\begin{figure}[h]
~~~~~~~~~~~~~~~~~~~~~~~~~~~~~~\includegraphics[scale=0.5]{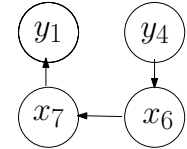}
\caption{Reduced side-information of $G$ given in Fig. \ref{minrankfig3}}
\label{minrankfig4}
\end{figure}
\end{example}

\begin{example}
\label{ex99}
Consider the index coding problem represented by the side-information graph $G$ given in Fig. \ref{minrankfig5}. From Construction I, the reduced side-information graph $G_R$ is given in Fig. \ref{minrankfig6}. In this example, the minrank of the graph $G$ is three and the minrank of graph $G_R$ is three. The index code for the index coding problem represented by $G_R$ is $\{y_1+y_4,y_4+x_6,x_6+x_7\}$ and the index code for the index coding problem represented by $G$ is $\{\underbrace{x_1+x_2+x_3}_{y_1}+\underbrace{x_4+x_5}_{y_4},~~\underbrace{x_4+x_5}_{y_4}+x_6,~~x_6+x_7\}$.
\begin{figure}[h]
~~~~~~~~~~~~~~~~~\includegraphics[scale=0.5]{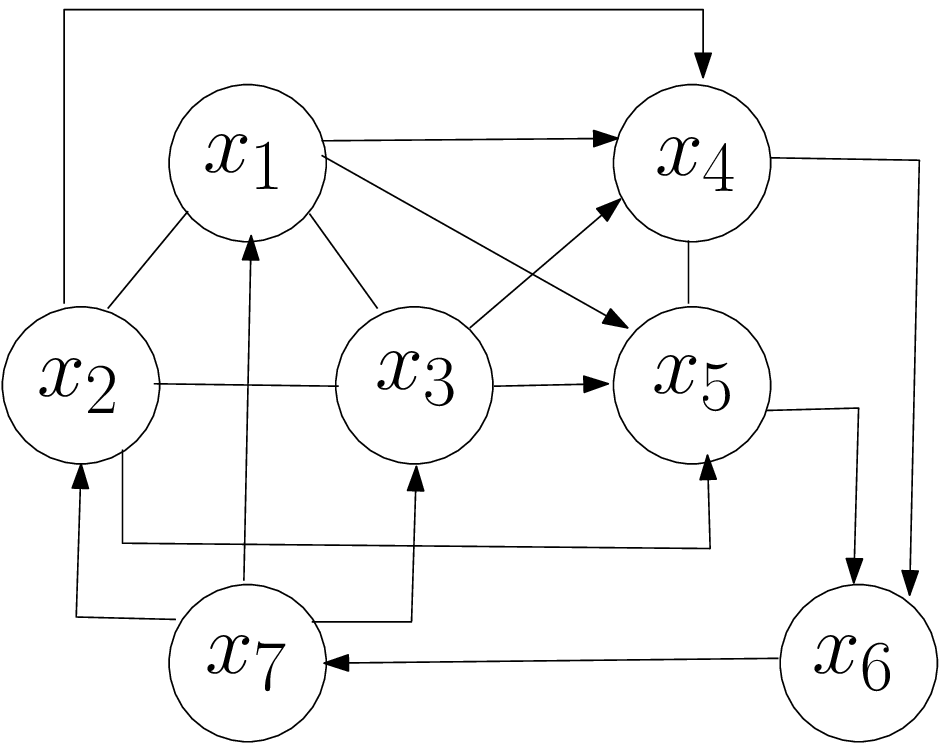}
\caption{}
\label{minrankfig5}
\end{figure}
\begin{figure}[h]
~~~~~~~~~~~~~~~~~~~~~~~~~~\includegraphics[scale=0.5]{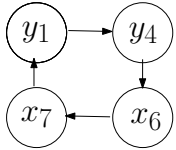}
\caption{Reduced side-information of $G$ given in Fig. \ref{minrankfig5}}
\label{minrankfig6}
\end{figure} 
\end{example}

In Example \ref{ex98} and Example \ref{ex99}, the minrank of side-information graph before and after reduction by using Construction I is same. However, the reduction method given in Construction I may not necessarily keep the rank same. For some graphs, the reduction procedure given in Construction I may increase the minrank of reduced side-information graph. Example \ref{ex100} given below is useful to understand this.

\begin{example}
\label{ex100}
Consider the index coding problem represented by the side-information graph $G$ given in Fig. \ref{minrankfig71}. In $G$, the vertices $\{x_1,x_2,x_3\}$ form a clique of size three and the vertices $\{x_4,x_5\}$ form a clique of size two. In the graph $G$, there exists a directed edge from every vertex in the clique $\{x_1,x_2,x_3\}$ to every vertex in the clique $\{x_4,x_5\}$. But, there does not exists an edge from every vertex in the clique $\{x_4,x_5\}$ to $x_6$. From Construction I, the reduced side-information graph $G_R$ is given in Fig. \ref{minrankfig81}. In this example, the minrank of the graph $G$ is nine, but the minrank of graph $G_R$ is ten.
\begin{figure}[h]
~~~~~~~~~~~~~~~~~\includegraphics[scale=0.5]{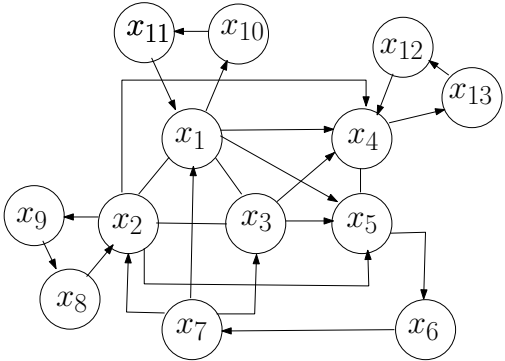}
\caption{Side-information graph $G$ with minrank nine}
\label{minrankfig71}
\end{figure}
\begin{figure}[h]
~~~~~~~~~~~~~~~~~~~~~~~~~~\includegraphics[scale=0.5]{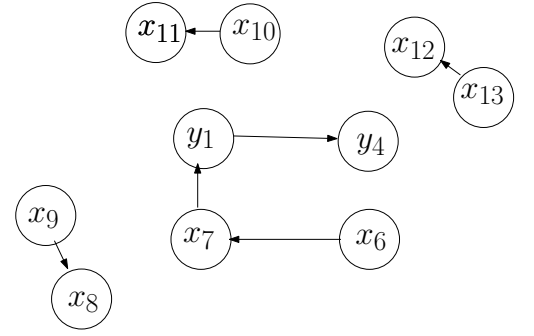}
\caption{Reduced side-information of $G$ given in Fig. \ref{minrankfig71}}
\label{minrankfig81}
\end{figure} 
\end{example}

In Theorem \ref{theorem2} and Construction I, it is assumed that the cliques in the graph are known. Note that finding a clique cover of a graph is an NP-hard problem. There exist various heuristic algorithms to find clique covers. In the Appendix, we give a heuristic algorithm to find the cliques by using binary operations on the adjacency matrix. There exist polynomial time cycle detection algorithms to check the cycle among a given set of vertices \cite{grapht}. Hence, given two cliques, one can find whether two cliques are cycle-free or not in polynomial time. 


\section{Code construction for groupcast index coding problems}
\label{sec4}
In this section, we give a method to convert a groupcast index coding problem into a single unicast index coding problem. This method, along with the other techniques given in this paper leads to a construction of index code for groupcast index coding problem.

\subsection{Converting groupcast ICP into single unicast ICP}
Consider a groupcast index coding problem with $K$ messages $\{ x_1,x_2,\ldots,x_K\}$ and a set of $m$ receivers $\{ R_1,R_2,\ldots,R_m\}$. Let $\mathcal{W}_k$ be the want-set and $\mathcal{K}_k$ be the side-information of receiver $R_k$ for $k \in [1:m]$. In groupcast index coding problem, there are no restrictions on want-set and side-information of each receiver. 


\begin{theorem}
\label{theorem4}
Consider a groupcast index coding problem with $K$ messages and $m$ receivers. Let $\gamma_k$ be the set of receivers wanting the message $x_k$ for $k \in [1:K]$ and 
\begin{align*}
\mathcal{\tilde{K}}_k=\bigcap_{\forall R_j \in \gamma_k} \mathcal{K}_j.
\end{align*}
\noindent Consider a single unicast index coding problem with $K$ messages $\{ x_1,x_2,\ldots,x_K\}$ and $K$ receivers $\{ \tilde{R}_1,\tilde{R}_2,\ldots,\tilde{R}_K\}$. The $k$th receiver $\tilde{R}_k$ wanting $x_k$ and having the side-information  $\mathcal{\tilde{K}}_k$. Then, any index code for this single unicast ICP is also an index code for the groupcast ICP.
\end{theorem}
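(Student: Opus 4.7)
The plan is to show that decodability in the single unicast ICP transfers to decodability in the groupcast ICP, simply because each single unicast receiver $\tilde{R}_k$ is, by construction, a worst-case surrogate for all the groupcast receivers that want the message $x_k$.

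First, I would fix an arbitrary index code $\mathfrak{C}$ for the constructed single unicast ICP. By definition, for every $k \in [1:K]$, receiver $\tilde{R}_k$ can recover $x_k$ from the broadcast $\mathfrak{C}$ together with the side-information $\mathcal{\tilde{K}}_k$; let $D_k$ denote the associated decoding function, so $x_k = D_k(\mathfrak{C}, \{x_i\}_{i \in \mathcal{\tilde{K}}_k})$. The code $\mathfrak{C}$ itself is just a function of the messages $\{x_1,\ldots,x_K\}$, so the very same coded symbols are available to every receiver in the groupcast problem.

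Next, I would verify the groupcast decoding requirement. Pick any groupcast receiver $R_j$ and any message $x_k \in \mathcal{W}_j$ that it wants; then $R_j \in \gamma_k$. By the definition $\mathcal{\tilde{K}}_k = \bigcap_{R_i \in \gamma_k} \mathcal{K}_i$, we have $\mathcal{\tilde{K}}_k \subseteq \mathcal{K}_j$, so $R_j$ knows every message that $\tilde{R}_k$ knows. Hence $R_j$ can evaluate the same decoding function $D_k(\mathfrak{C}, \{x_i\}_{i \in \mathcal{\tilde{K}}_k})$ and thereby recover $x_k$. Applying this argument for every $x_k \in \mathcal{W}_j$ and every receiver $R_j$ shows that $\mathfrak{C}$ is a valid index code for the groupcast ICP.

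There is really no hard step here; the theorem is essentially a monotonicity statement and the only subtlety is making the set containment $\mathcal{\tilde{K}}_k \subseteq \mathcal{K}_j$ explicit and observing that the single unicast decoder only uses side-information indices lying in $\mathcal{\tilde{K}}_k$, so any receiver with a superset of that side-information can emulate it. I would note that this construction is typically sub-optimal, since the intersection can be strictly smaller than each individual $\mathcal{K}_j$, which is consistent with the paper calling this a ``sub-optimal method'' in the introduction.
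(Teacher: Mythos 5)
Your proposal is correct and follows essentially the same route as the paper: the paper's proof also fixes an index code $\mathfrak{C}$ for the single unicast ICP and observes the containment $\mathcal{K}_k \supseteq \bigcup_{x_j \in \mathcal{W}_k} \mathcal{\tilde{K}}_j$, which is just the aggregated form of your per-message inclusion $\mathcal{\tilde{K}}_k \subseteq \mathcal{K}_j$. Your version is slightly more explicit about the decoder-emulation step, but the argument is the same.
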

\begin{proof}
Let $\mathfrak{C}$ be the index code for the converted single unicast index coding problem. From $\mathfrak{C}$, every receiver $\tilde{R}_k$ for $k \in [1:K]$ can decode its wanted message $x_k$. From the definition of $\mathcal{\tilde{K}}_k$, we have
\begin{align*}
\mathcal{K}_k \supseteq \bigcup_{\forall x_j \in \mathcal{W}_k} \mathcal{\tilde{K}}_j
\end{align*}
for $k \in [1:m]$. Hence, receiver $R_k$ for $k \in [1:m]$ can decode all its wanted messages in $\mathcal{W}_k$ from the given index code $\mathfrak{C}$.
\end{proof}


\subsection{Steps to construct index code for groupcast index coding problems}
In the following four steps, we give a heuristic approach to construct an index code for groupcast index coding problems. We refer the following four steps as Construction II in the rest of the paper. \\

\noindent
{\bf Construction II}

\begin{step}
Convert the given groupcast index coding problem into a single unicast index coding problem by using Theorem \ref{theorem4}.
\end{step}
\begin{step}
\label{step2}
Find the clique cover by using Algorithm \ref{algo1}. Reduce the given minrank problem into a smaller problem by using Construction I.
\end{step}
\begin{step}
\label{step3}
Find the cycle cover in the reduced minrank problem by using any cycle cover algorithm.
\end{step}
\begin{step}
Construct the index code by using the clique cover and cycle cover found in Step \ref{step2} and Step \ref{step3}.
\end{step}

The other method that can be used to construct index codes for groupcast problems is partition multicast \cite{bipartiate}. However, the partition multicast is NP-hard and requires higher field size. The field size required in partition multicast depends on the number of messages in a partition and the number of messages known to each receiver in the partition. Whereas, Construction II can be used to construct index code in polynomial time and this method is independent of field size. Note that both partition multicast and Construction II are suboptimal in the length of index code. 

\begin{example}
\label{ex10}
Consider a groupcast index coding problem with seven messages and ten receivers as given in Table \ref{table11}. The single unicast index coding problem corresponding to the groupcast index coding problem obtained from Theorem \ref{theorem4} is given in Table \ref{table21}. Cliques in this single unicast index coding problem can be found by using Algorithm \ref{algo1} and this single unicast index coding problem can be converted into a reduced index coding problem by using Construction I. The side-information graph of single unicast ICP given in Table \ref{table21} and its reduced side-information graph are shown in Fig. \ref{minrankfig7} and Fig. \ref{minrankfig8}. The minrank of $G_R$ and hence the minrank of $G$ is four. The index code for the index coding problem represented by $G_R$ is $\{y_1+x_3,x_3+y_4,y_4+x_6,x_6+x_7\}$ and the index code for the index coding problem represented by $G$ is $\{\underbrace{x_1+x_2}_{y_1}+x_3,~x_3+\underbrace{x_4+x_5}_{y_4},~\underbrace{x_4+x_5}_{y_4}+x_6,~x_6+x_7\}$.
\begin{table}[ht]
\centering
\setlength\extrarowheight{3.25pt}
\begin{tabular}{|c|c|c|c|}
\hline
$R_k$ & $\mathcal{W}_k$ & $\mathcal{K}_k$ \\
\hline
$R_1$ &$x_1,x_4$& $x_2,x_3,x_5,x_6$\\
\hline
$R_2$ &$x_1,x_5$& $x_2,x_3,x_4,x_6$\\
\hline
$R_3$ &$x_2,x_4$& $x_1,x_3,x_5,x_6$\\
\hline
$R_4$ &$x_4,x_7$& $x_1,x_2,x_5,x_6$\\
\hline
$R_5$ &$x_7$& $x_1,x_2$\\
\hline
$R_5$ &$x_3,x_6$& $x_4,x_5,x_7$\\
\hline
$R_6$ &$x_3$& $x_4,x_5$\\
\hline
$R_7$ &$x_5,x_7$& $x_1,x_2,x_4,x_6$\\
\hline
$R_8$ &$x_2,x_6$& $x_1,x_3,x_7$\\
\hline
$R_9$ &$x_6,x_1$& $x_2,x_3,x_7$\\
\hline
$R_{10}$ &$x_2,x_5$& $x_1,x_3,x_4,x_6$\\
\hline
\end{tabular}
\caption{~~}
\label{table11} 
\end{table}


\begin{table}[ht]
\centering
\setlength\extrarowheight{3.25pt}
\begin{tabular}{|c|c|c|c|}
\hline
$R_k$ & $\mathcal{W}_k$ & $\mathcal{K}_k$ \\
\hline
$\tilde{R}_1$ &$x_1$& $x_2,x_3$\\
\hline
$\tilde{R}_2$ &$x_2$& $x_1,x_3$\\
\hline
$\tilde{R}_3$ &$x_3$& $x_4,x_5$\\
\hline
$\tilde{R}_4$ &$x_4$& $x_5,x_6$\\
\hline
$\tilde{R}_5$ &$x_5$& $x_2,x_4,x_6$\\
\hline
$\tilde{R}_6$ &$x_6$& $x_7$\\
\hline
$\tilde{R}_7$ &$x_7$& $x_1,x_2$\\
\hline
\end{tabular}
\caption{Single unicast ICP obtained from Theorem \ref{theorem4} for the groupcast ICP given in Table \ref{table11}}
\label{table21}
\end{table}
\begin{figure}[h]
~~~~~~~~~~~~~~~~~\includegraphics[scale=0.5]{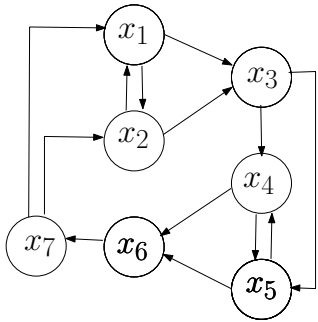}
\caption{Side-information graph of single unicast ICP given in Table \ref{table21}}
\label{minrankfig7}
\end{figure}
\begin{figure}[h]
~~~~~~~~~~~~~~~~~~~~~~~~~~\includegraphics[scale=0.5]{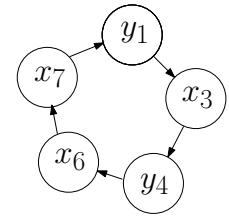}
\caption{Reduced side-information of $G$ given in Fig. \ref{minrankfig7}}
\label{minrankfig8}
\end{figure}
\end{example}
\begin{note}
For the groupcast index coding problem given in Table \ref{table11}, the index code length obtained by using partition multicast is five, whereas, by using Construction II, we can construct index code of length four as shown in Example \ref{ex10}.
\end{note}
\begin{example}
\label{ex11}
Some of the groupcast index coding problems in which the length of the index code given by Construction II is less than the length obtained from  partition multicast are given in Table \ref{table22}. In Table \ref{table22}, we use $K$,$m$,$l^*$ and $l_{PM}$ to denote number of messages, number of receivers, length of index code by using Construction II and length of index code by using partition multicast respectively. The minimum field size required to construct the index code is mentioned with the length of index code in both the methods. For the groupcast index coding problem given in S. No. 6 of Table \ref{table22}, the minimum field size required in partition multicast is $\mathbb{F}_{11}$, whereas Construction II gives the index code in $\mathbb{F}_2$ and the length of index code given by Construction II is one less than that of partition multicast.
\begin{table*}[t]
\centering
\setlength\extrarowheight{3.25pt}
\begin{tabular}{|c|c|c|c|c|c|c|c|}
\hline
S.No & $K$& $m$& $\mathcal{W}_k$ & $\mathcal{K}_k$ & Index Code& $l^*$ & $l_{PM}$\\
\hline
$1$ &$6$& $5$ & $\mathcal{W}_1=\{x_1,x_6\},\mathcal{W}_2=\{x_2,x_6\}$ & $\mathcal{K}_1=\{x_2,x_3,x_4\},\mathcal{K}_2=\{x_1,x_3,x_4\}$ & $\mathfrak{C}=\{x_1+x_2+x_3+x_4,$&$3$&$4$\\
~ &~& ~ & $\mathcal{W}_3=\{x_3\},\mathcal{W}_4=\{x_4\}$ &  $\mathcal{K}_3=\{x_1,x_2,x_4\},\mathcal{K}_4=\{x_5\}$ & $x_4+x_5,~x_6\}$&$(\mathbb{F}_2)$&$(\mathbb{F}_2)$\\
~ &~& ~ & $\mathcal{W}_5=\{x_5,x_6\}.$ & $\mathcal{K}_5=\{x_1,x_2,x_3\}.$ & ~&~&~\\
\hline
$2$ &$9$& $8$ & $\mathcal{W}_1=\{x_1,x_2\},\mathcal{W}_2=\{x_3,x_5\}$ & $\mathcal{K}_1=\{x_3,x_4,x_5\},$ & $\mathfrak{C}=\{x_1+x_5,$&$5$&$7$\\
~ && ~ & $\mathcal{W}_3=\{x_4,x_9\},\mathcal{W}_4=\{x_7\}$ &  $\mathcal{K}_2=\{x_2,x_4,x_7,x_8\},\mathcal{K}_3=\{x_1,x_6\},$ & $x_5+x_7+x_8,$&$(\mathbb{F}_2)$&$(\mathbb{F}_2)$\\
~ &~& ~ & $\mathcal{W}_5=\{x_4,x_8\},\mathcal{W}_6=\{x_6\}.$ & $\mathcal{K}_4=\{x_8,x_9\},\mathcal{K}_5=\{x_6,x_7,x_9\},$ & $x_7+x_8+x_9,$&~&~\\
~ &~& ~ & $\mathcal{W}_7=\{x_1,x_4\},\mathcal{W}_8=\{x_6,x_9\}.$ & $\mathcal{K}_6=\{x_2,x_3\},\mathcal{K}_7=\{x_5,x_6\},$ &$x_2+x_3+x_4,$&~&~\\
~ &~& ~ & ~ & $\mathcal{K}_8=\{x_1,x_2,x_3\},$ &$x_4+x_6\}$&~&~\\
\hline
$3$ &$7$& $8$ & $\mathcal{W}_1=\{x_1,x_7\},\mathcal{W}_2=\{x_3,x_6\}$ & $\mathcal{K}_1=\{x_3,x_6\},\mathcal{K}_2=\{x_2,x_4,x_7\}$ & $\mathfrak{C}=\{x_1+x_3,$&$4$&$5$\\
~ &~& ~ & $\mathcal{W}_3=\{x_2,x_6\},\mathcal{W}_4=\{x_4,x_7\}$ &  $\mathcal{K}_3=\{x_4,x_5,x_7\},\mathcal{K}_4=\{x_2,x_5,x_7\}$ & $x_2+x_3+x_4,$&$(\mathbb{F}_2)$&$(\mathbb{F}_2)$\\
~ &~& ~ & $\mathcal{W}_5=\{x_5\},\mathcal{W}_6=\{x_5,x_6\}.$ & $\mathcal{K}_5=\{x_1\},\mathcal{K}_6=\{x_1,x_7\}.$ & $x_2+x_4+x_5,\}$&~&~\\
~ &~& ~ & $\mathcal{W}_7=\{x_2\},\mathcal{W}_8=\{x_1\}.$ & $\mathcal{K}_7=\{x_4,x_5\},\mathcal{K}_8=\{x_3\}.$ & $x_6+x_7\}$&~&~\\
\hline
$4$ &$12$& $6$ & $\mathcal{W}_1=\{x_1,x_7,x_9\},$ & $\mathcal{K}_1=\{x_2,x_3,x_{10},x_{11},x_{12}\}$ & $\mathfrak{C}=\{x_1+x_2+x_3,$&$6$&$9$\\
~ &~& ~ & $\mathcal{W}_2=\{x_1,x_{11},x_{12}\},$ & $\mathcal{K}_2=\{x_2,x_3,x_5,x_6,x_8,x_{10}\}.$ &$x_5+x_6+x_{12},$&$(\mathbb{F}_2)$&$(\mathbb{F}_2)$\\
~ &~& ~ & $\mathcal{W}_3=\{x_2,x_{12}\},$ &  $\mathcal{K}_3=\{x_1,x_3,x_5,x_6\}$ & $x_5+x_6+x_7,$&~&~\\
~ &~& ~ & $\mathcal{W}_4=\{x_3,x_5,x_{10}\},$ & $\mathcal{K}_4=\{x_4,x_6,x_7,x_8,x_{11}\}.$ &$x_3+x_4,x_8+x_9,$&~&~\\
~ &~& ~ & $\mathcal{W}_5=\{x_4,x_6\},\mathcal{W}_6=\{x_3,x_8\}.$ & $\mathcal{K}_5=\{x_1,x_2,x_5,x_7\},\mathcal{K}_6=\{x_4,x_9\}.$ & $x_9+x_{10}+x_{11}\}$&~&~\\
\hline
$5$ &$12$& $6$ & $\mathcal{W}_1=\{x_1,x_7,x_9\},$ & $\mathcal{K}_1=\{x_2,x_3,x_{10},x_{11},x_{12}\}$ & $\mathfrak{C}=\{x_1+x_2+x_3,$&$6$&$8$\\
~ &~& ~ & $\mathcal{W}_2=\{x_1,x_{11},x_{12}\},$ & $\mathcal{K}_2=\{x_2,x_3,x_5,x_6,x_8,x_{10}\}.$ &$x_5+x_6+x_{12},$&$(\mathbb{F}_2)$&$(\mathbb{F}_{13})$\\
~ &~& ~ & $\mathcal{W}_3=\{x_2,x_{12}\},$ &  $\mathcal{K}_3=\{x_1,x_3,x_5,x_6\}$ & $x_5+x_6+x_7,$&~&~\\
~ &~& ~ & $\mathcal{W}_4=\{x_3,x_5,x_{10}\},$ & $\mathcal{K}_4=\{x_4,x_6,x_7,x_8,x_{11}\}.$ &$x_3+x_4,x_8+x_9,$&~&~\\
~ &~& ~ & $\mathcal{W}_5=\{x_4,x_6\},\mathcal{W}_6=\{x_3,x_8\}.$ & $\mathcal{K}_5=\{x_1,x_2,x_5,x_7\},$ & $x_9+x_{10}+x_{11}\}$&~&~\\
~ &~& ~ & ~ & $\mathcal{K}_6=\{x_4,x_5,x_6,x_9\}.$ & ~&~&~\\
\hline
$6$ &$10$& $8$ & $\mathcal{W}_1=\{x_1,x_2,x_{10}\},$ & $\mathcal{K}_1=\{x_3,x_4,x_5\},$ & $\mathfrak{C}=\{x_1+x_5,$&$6$&$7$\\
~ && ~ & $\mathcal{W}_2=\{x_3,x_5,x_{10}\},$ &  $\mathcal{K}_2=\{x_2,x_4,x_7,x_8\},$ & $x_5+x_7+x_8,$&$(\mathbb{F}_2)$&$(\mathbb{F}_{11})$\\
~ &~& ~ & $\mathcal{W}_3=\{x_4,x_9\},\mathcal{W}_4=\{x_7\},$ & $\mathcal{K}_3=\{x_1,x_6,x_{10}\},\mathcal{K}_4=\{x_8,x_9,x_{10}\},$ & $x_7+x_8+x_9,$&~&~\\
~ &~& ~ & $\mathcal{W}_5=\{x_4,x_8\},\mathcal{W}_6=\{x_6\},$ & $\mathcal{K}_5=\{x_6,x_7,x_9\},\mathcal{K}_6=\{x_2,x_3,x_{10}\},$ &$x_2+x_3+x_4,$&~&~\\
~ &~& ~ & $\mathcal{W}_7=\{x_1,x_4\},\mathcal{W}_8=\{x_6,x_9\}.$ & $\mathcal{K}_7=\{x_5,x_6,x_{10}\},\mathcal{K}_8=\{x_1,x_2,x_3\},$ &$x_4+x_6,~x_{10}\}$&~&~\\
\hline
\end{tabular}
\caption{Some instances of the groupcast index coding problem where the length of the index code given by Construction II is less than the length obtained from partition multicast. }
\label{table22}
\end{table*}

\end{example}
\section{conclusion and discussions}
\label{sec5}
In this paper, we give a method to find some of the minrank-non-critical edges in the side-information graph. We give a simple heuristic method to find the clique cover by using binary operations on adjacency matrix. We presented a method to address groupcast index coding problems. It is interesting to analyze more properties of minrank and design algorithms to compute the minrank of a side-information graph in a more efficient way.

\section*{Acknowledgement}
This work was supported partly by the Science and Engineering Research Board (SERB) of Department of Science and Technology (DST), Government of India, through J.C. Bose National Fellowship to B. Sundar Rajan.

\begin{appendix}
\begin{center}
A HEURISTIC ALGORITHM TO FIND A CLIQUE COVER
\end{center}

Birk \textit{et al.} \cite{ISCO} proposed least difference greedy (LDG) clique cover algorithm to find the cliques in side-information graph. Kwak \textit{et al.} \cite{eldg} improved the LDG algorithm by proposing extended least difference greedy (ELDG) clique cover algorithm to find the cliques in side-information graph. However, the way the cliques are found in both LDG and ELDG algorithms depends also on the directed edges in the side-information graph which do not contribute to cliques. In this section, we give a method for the heuristic search of the cliques in the side-information graph by using binary operations on the adjacency matrix.

LDG and ELDG algorithms use fitting matrix to find cliques, whereas the algorithm presented in this paper use the adjacency matrix.  LDG and ELDG algorithms are given below for continuity and completeness of presentation.

\subsection{Least Difference Greedy Clique-Cover Algorithm}
The distance between two rows in a fitting matrix is defined as the sum of the inter-entry distance, where the inter-entry distance $d$ is defined as
\begin{align}
\label{interrow}
&d(0,0)=d(1,1)=d(*,*)=0,\\&
d(0,*)=d(1,*)=1,~~\text{and}~~d(0,1)=\infty.
\end{align}
Based on the defined inter-row distance, the LDG algorithm finds the minimum distance among all possible pairs of two rows in a fitting matrix and merges two rows with the minimum inter-row distance, and then iterates this procedure until all inter-row distances become infinite.

\subsection{Extended Least Difference Greedy Clique-Cover Algorithm}
The ELDG algorithm is based on the observation that the minrank of the fitting matrix $\mathbf{A}$ and $\mathbf{A}^\mathsf{T}$ is same. In ELDG algorithm, the LDG algorithm is applied on both the rows and columns of the fitting matrix and the algorithm chooses the row/column merging requiring less *'s.

In both LDG and ELDG algorithms, the inter-row distance given in \eqref{interrow} depends on directed edges which do not contribute to any cliques. Hence, some of the directed edges present in the graph can mislead the row/column merging in ELDG algorithm. Consider the side information graph $G$ given in Fig. \ref{fig21}. The row inter entry distances (denoted by $d_r(i,j)$ for the $i$th and $j$th rows) and the column inter entry distances (denoted by $d_c(i,j)$ for the $i$th and $j$th columns) are given in Table \ref{table1} and \ref{table2}. In Table \ref{table1} and \ref{table2}, the inter row and inter column distances is the minimum between rows $2$ and $4$ and columns $3$ and $5$. Hence, in $G$, the ELDG algorithm merges the two sets of rows $(2,4)$ and $(3,5)$. Hence, the rank reduction by ELDG algorithm is two, whereas, one can reduce the rank by three by merging the rows $(1,2,3)$ and $(4,5)$.

    \begin{figure}[h]
        \includegraphics[scale=0.45]{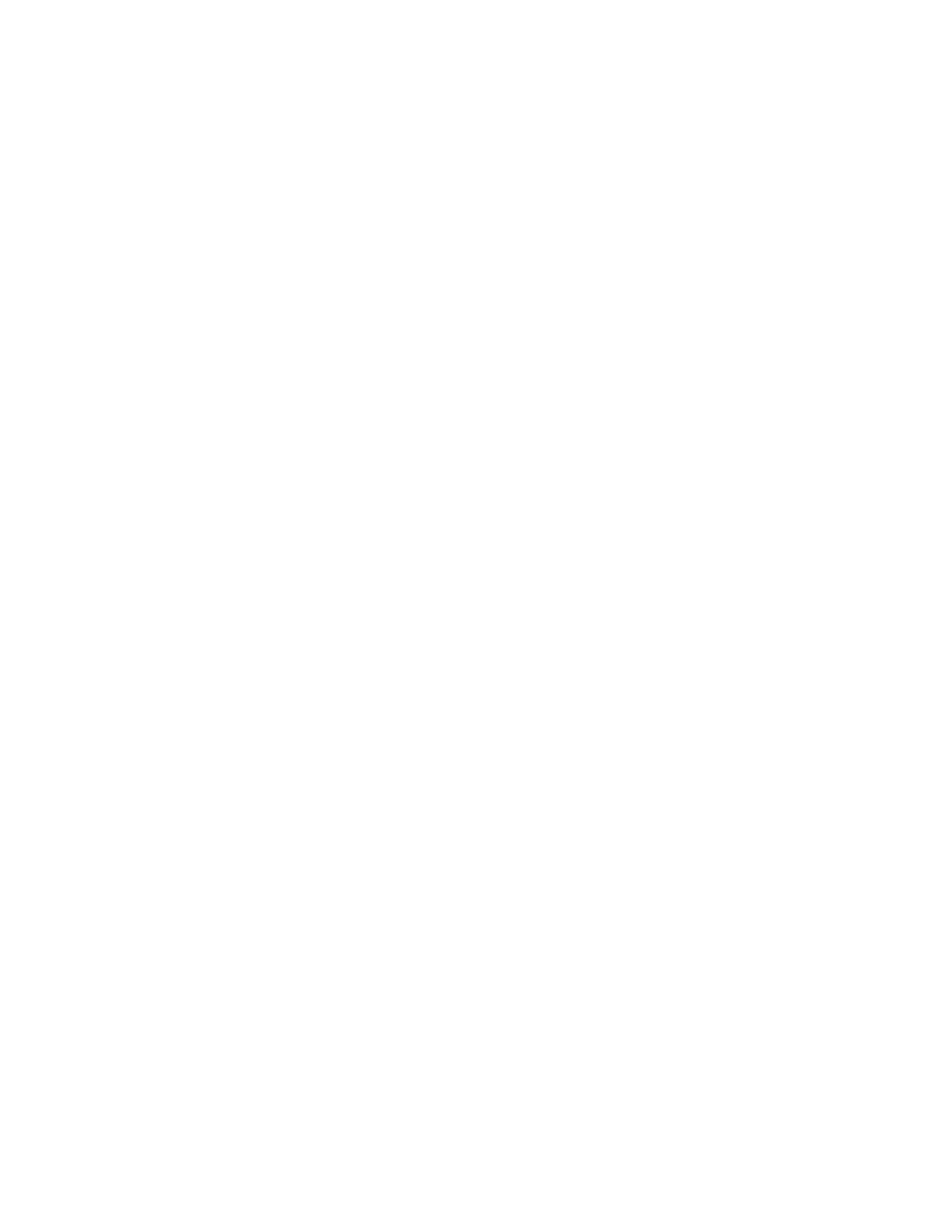}
  \caption{}
  \label{fig21}
\end{figure}

\begin{figure}
{
$\left[\begin{array}{ccccccccccccccc}
  1 & * & * & 0 & 0 & * & * & * & * & 0 & 0 & 0 & 0 & 0 & 0\\
  * & 1 & * & * & 0 & 0 & 0 & 0 & 0 & * & * & 0 & 0 & 0 & 0\\
  * & * & 1 & 0 & * & 0 & 0 & 0 & 0 & 0 & 0 & * & * & 0 & 0\\
  0 & * & 0 & 1 & * & * & * & 0 & 0 & * & * & 0 & 0 & 0 & 0\\

  0 & 0 & * & * & 1 & 0 & 0 & * & * & 0 & 0 & * & * & 0 & 0\\
  0 & * & 0 & 0 & 0 & 1 & 0 & 0 & 0 & 0 & 0 & 0 & 0 & 0 & 0\\
  0 & * & 0 & 0 & 0 & 0 & 1 & 0 & 0 & 0 & 0 & 0 & 0 & 0 & 0\\
  0 & 0 & * & 0 & 0 & 0 & 0 & 1 & 0 & 0 & 0 & 0 & 0 & 0 & 0\\
  0 & 0 & * & 0 & 0 & 0 & 0 & 0 & 1 & 0 & 0 & 0 & 0 & 0 & 0\\
  0 & 0 & * & 0 & * & 0 & 0 & 0 & 0 & 1 & 0 & 0 & 0 & 0 & 0\\
  0 & 0 & * & 0 & * & 0 & 0 & 0 & 0 & 0 & 1 & 0 & 0 & 0 & 0\\
  0 & * & 0 & * & 0 & 0 & 0 & 0 & 0 & 0 & 0 & 1 & 0 & 0 & 0\\
  0 & * & 0 & * & 0 & 0 & 0 & 0 & 0 & 0 & 0 & 0 & 1 & 0 & 0\\
  0 & * & 0 & * & 0 & 0 & 0 & 0 & 0 & 0 & 0 & 0 & 0 & 1 & 0\\
  0 & 0 & * & 0 & * & 0 & 0 & 0 & 0 & 0 & 0 & 0 & 0 & 0 & 1\\
  \end{array}\right].$
}
\caption{Fitting matrix of side-information given in Fig. \ref{fig21}}
\end{figure}

\setlength\extrarowheight{3.25pt}
\begin{table}[h]
\centering
\begin{tabular}{|c|c|c|}
\hline
$d_r(1,2)=9$ & $d_r(1,3)=9$ & $d_r(1,j)=\infty$ for $j \in [4:13]$ \\
\hline
$d_r(2,3)=8$ & $d_r(2,4)=7$ & $d_r(1,j)=\infty$ for $j \in [5:13]$ \\
\hline
$d_r(3,4)=\infty$ & $d_r(3,5)=7$ & $d_r(3,j)=\infty$ for $j \in [6:13]$ \\
\hline
$d_r(4,5)=11$ & $d_r(4,6)=\infty$ & $d_r(1,j)=\infty$ for $j \in [7:13]$ \\
\hline
\end{tabular}
\caption{Row inter-entry distances of $G$ given in Fig. \ref{fig21}}
\label{table1}
\end{table}

\setlength\extrarowheight{3.25pt}
\begin{table}[h]
\centering
\begin{tabular}{|c|c|c|}
\hline
$d_c(1,2)=8$ & $d_c(1,3)=8$ & $d_c(1,j)=\infty$ for $j \in [4:13]$ \\
\hline
$d_c(2,3)=11$ & $d_c(2,4)=7$ & $d_c(1,j)=\infty$ for $j \in [5:13]$ \\
\hline
$d_c(3,4)=\infty$ & $d_c(3,5)=7$ & $d_c(1,j)=\infty$ for $j \in [6:13]$ \\
\hline
$d_c(4,5)=8$ & $d_c(4,6)=\infty$ & $d_c(1,j)=\infty$ for $j \in [7:13]$ \\
\hline
\end{tabular}
\caption{Column inter-entry distances of $G$ given in Fig. \ref{fig21}}
\label{table2}
\end{table}

The Hadamard product of two matrices $\mathbf{A}$ and $\mathbf{B}$ is denoted by $\mathbf{A}\circ \mathbf{B}$ and defined by $(\mathbf{A}\circ \mathbf{B})(i,j)=\mathbf{A}(i,j)\cdot \mathbf{B}(i,j)$. That is, the Hadamard product of two matrices is the element wise binary and operation of its elements.

\begin{lemma}
\label{lemma1}
Let $\mathbf{A}$ be the adjacency matrix of the graph $G$, define $\mathbf{B}=\mathbf{A}^{\mathsf{T}} \circ \mathbf{A}$. The matrix $\mathbf{B}$ is a symmetric matrix and represents the location of undirected edges in $G$.
\end{lemma}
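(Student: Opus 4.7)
The plan is to unpack the definition of the Hadamard product entry by entry and match it against the paper's characterization of undirected edges. First I would compute, for arbitrary indices $i,j$, the entry
\[
\mathbf{B}(i,j) \;=\; \mathbf{A}^{\mathsf{T}}(i,j)\cdot \mathbf{A}(i,j) \;=\; \mathbf{A}(j,i)\cdot \mathbf{A}(i,j),
\]
using only the definitions of transpose and of the Hadamard product. Since $\mathbf{A}$ is a binary matrix, this product lies in $\{0,1\}$, and it equals $1$ if and only if both $\mathbf{A}(i,j)=1$ and $\mathbf{A}(j,i)=1$.

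Next I would invoke the paper's conventions for $\mathbf{A}$, namely that $\mathbf{A}(i,j)=1$ encodes a directed edge from $x_i$ to $x_j$, and that an undirected edge between $x_i$ and $x_j$ is present precisely when both directed edges $(i,j)$ and $(j,i)$ exist. Combining this with the previous computation yields that $\mathbf{B}(i,j)=1$ if and only if there is an undirected edge between $x_i$ and $x_j$ in $G$, which is exactly the claim about locating undirected edges.

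Finally, symmetry follows immediately: since scalar multiplication in $\{0,1\}$ is commutative,
\[
\mathbf{B}(i,j) \;=\; \mathbf{A}(j,i)\cdot \mathbf{A}(i,j) \;=\; \mathbf{A}(i,j)\cdot \mathbf{A}(j,i) \;=\; \mathbf{B}(j,i),
\]
so $\mathbf{B}=\mathbf{B}^{\mathsf{T}}$. There is essentially no obstacle here; the lemma is a direct unpacking of definitions, and the only care needed is to be explicit that $\mathbf{A}^{\mathsf{T}}(i,j)=\mathbf{A}(j,i)$ and that the Hadamard product on binary entries realizes logical AND, so the conclusion drops out without any case analysis or auxiliary construction.
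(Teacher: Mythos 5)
Your proof is correct and takes essentially the same approach as the paper: both unpack the Hadamard product entrywise and match $\mathbf{B}(i,j)=\mathbf{A}(j,i)\cdot\mathbf{A}(i,j)$ against the convention that an undirected edge means both directed edges are present. Your single-formula treatment even subsumes the paper's explicit two-case analysis (one-way edge versus undirected edge) and handles the no-edge case uniformly.
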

\begin{proof}
If there exists a directed edge from $x_i$ to $x_j$, we have $\mathbf{A}(i,j)=1$, $\mathbf{A}(j,i)=0$,  $\mathbf{A}^{\mathsf{T}}(i,j)=0$ and $\mathbf{A}^{\mathsf{T}}(j,i)=1$. Hence, we have
\begin{align*}
&\mathbf{B}(i,j)=\mathbf{A}(i,j)^{\mathsf{T}}\circ \mathbf{A}(i,j)=0, \ and \\ &\mathbf{B}(j,i)=\mathbf{A}(j,i)^{\mathsf{T}}\circ \mathbf{A}(j,i)=0.
\end{align*}

If there exists an undirected edge from $x_i$ to $x_j$, we have $\mathbf{A}(i,j)=1$, $\mathbf{A}(j,i)=1$,   $\mathbf{A}^{\mathsf{T}}(i,j)=1$ and $\mathbf{A}^{\mathsf{T}}(j,i)=1$. Hence, we have
\begin{align*}
&\mathbf{B}(i,j)=\mathbf{A}^{\mathsf{T}}(i,j)\circ \mathbf{A}(i,j)=1, \ and \\ &\mathbf{B}(j,i)=\mathbf{A}^{\mathsf{T}}(j,i)\circ \mathbf{A}(j,i)=1.
\end{align*}
This completes the proof.
\end{proof}

By using Lemma \ref{lemma1}, in Algorithm \ref{algo1}, we give a heuristic method to find the cliques in the side-information graph. In the $l$th iteration, Algorithm \ref{algo1} finds cliques of size two in the adjacency matrix given by $(l-1)$th iteration. Note that every row (or column) of the adjacency matrix in $(l-1)$th iteration corresponds to a clique detected in $(l-1)$th iteration.

                \begin{algorithm}
                        \caption{Heuristic algorithm to find a clique cover of a side-information graph}
                        \label{algo1}
                        \begin{algorithmic}[1]
                                \item $\mathbf{A}$ is the adjacency matrix of graph $G$ of order $V(G)\times V(G)$.
                                \item $C_i=\{i\},i=1,2,\ldots,V(G)$, $C=\phi$ and $n=V(G)$.
                                \item $\mathbf{B}=\mathbf{A}^{\mathsf{T}} \circ \mathbf{A}$ and $S=\phi,D=\phi,E=\phi$.
                                \If {$\mathbf{B}$ is a zero matrix}, goto step 5
                                \begin{itemize}
                                \item[\footnotesize{4.1:}]Let $R=\{k_1,k_2,\ldots,k_n\}$ be the set of row indices in the ascending order of their Hamming weights ($\mathbf{A}$ is a $n \times n$ matrix).
                                \item[\footnotesize{4.2:}] $i =1$ and $t=0$.
                                \item[\footnotesize{4.3:}] \textbf{if} $wt(k_i=0)$ \textbf{then},
                                \item[\footnotesize{4.4:}]  $E=E\cup\{k_i\}$ and $i \gets i+1$, repeat step 4.3.

                                        \item[\footnotesize{4.4:}] \textbf{else if} $B(k_i,k_j)=1$ (WLOG we assume $k_j>k_i$)
                                        \begin{itemize}
                                        \item[$\bullet$] $S \gets S \cup \{(k_i,k_j)\}$, $C_{k_i}=C_{k_i}\cup C_{k_j}$ and $C_{k_j}=\phi$.

                                        \item[$\bullet$] $R \gets R \backslash \{k_i,k_j\} $.
                                        \item[$\bullet$] $i \gets i+1$ and $t=t+1$.
                                        \item[$\bullet$] \textbf{if} {$i<n$} \textbf{then}, repeat Step 4.4.

                                        \end{itemize}
                              
                                \item[\footnotesize{4.5:}] $S_1=S$.
                                \item[\footnotesize{4.6:}] \textbf{if} {$(r_i,r_j)\in S_1$} \textbf{then},
                                \begin{itemize}
                                \item[$\bullet$] $A(r_i,:)=A(r_i,:)\circ A(r_j,:)$.
                                \item[$\bullet$] $S_1=S_1 \backslash \{(r_i,r_j)\}$.
                                \item[$\bullet$] \textbf{if} {$S_1\neq \phi$} \textbf{then}, repeat Step 4.6.

                                \end{itemize}

                                \item[\footnotesize{4.7:}] $S_2=S$.
                                \item[\footnotesize{4.8:}] \textbf{if} {$(r_i,r_j)\in S_2$} \textbf{then},
                                \begin{itemize}
                                \item[$\bullet$] $A(:,r_i)=A(:,r_i)\circ A(:,r_j)$.
                                \item[$\bullet$] $S_2=S_2 \backslash \{(r_i,r_j)\}$.
                                \item[$\bullet$] $D=D\cup \{j\}$
                                \item[$\bullet$] \textbf{if} {$S_2\neq \phi$} \textbf{then}, repeat Step 4.8.

                                \end{itemize}
                                \item[\footnotesize{4.9:}] Delete the rows and columns in the $n \times n$ matrix $\mathbf{A}$ whose indices are present in the set $D$ and $E$. Output the $C_j$s for every $j \in D$.
                                \item[\footnotesize{4.10:}] \textbf{do} $\{$
                                \item[\footnotesize{4.11:}]Let $d_{k_i}$ be the number of rows deleted in $\mathbf{A}$ above $k_i$ (rows whose indices less than $k_i$). $C_{k_i-d_{k_i}}=C_{k_i}$ and $t=t-1$.
                                \item[\footnotesize{4.12:}] \textbf{while} $\{t>0\}$.
                                \item[\footnotesize{4.13:}] $n=n-\mid D \mid -\mid E \mid.$
                                \item[\footnotesize{4.14:}] Goto Step 3.
                                \end{itemize}
                                \EndIf
                                .
                        \end{algorithmic}
                \end{algorithm}

\subsection{Algorithm description}
In the first iteration, the algorithm finds cliques of size two in $G$. That is, after the first iteration every row (or column) in $\mathbf{A}$ represents a vertex in $G$ or a clique of size two in $G$. Similarly, after second iteration, every row (or column) in $\mathbf{A}$ represents a vertex in $G$ or a clique of size 2/3/4 in $G$. The iterations continue until there exist no more combinable rows and columns in the matrix $\mathbf{A}$ obtained from the previous iteration (this condition is equivalent to $\mathbf{B}=0$). In Step 4.1, the row indices are arranged in the ascending order of their Hamming weights to ensure that the algorithm first combines the vertices which are connected with less number of undirected edges (the vertices which are connected with more number of undirected edges have more options for combining, hence they are treated later in the order).

The purpose of different sets used in Algorithm \ref{algo1} are given below.
\begin{itemize}
\item The set $S$ keeps track of two tuples of row indices that form a clique in the present iteration
\item The set $D$ keeps track of the indices of rows which would be merged with the other rows. The rows and columns corresponding to the indices present in $D$ would be deleted before going to the next iteration.
\item The set $E$ keeps track of the rows in $\mathbf{B}$ whose Hamming weight is zero. The rows and columns corresponding to the indices present in $E$ would be deleted before going to the next iteration.
\item The sets $C_{k_1},C_{k_2},\ldots,C_{k_n}$ give the cliques identified till the present iteration.
\end{itemize}

\subsection{Computational complexity of Algorithm \ref{algo1}}
By using Algorithm \ref{algo1}, all cliques of size two can be found by using $n^2$ binary $AND$ operations. If the graph consists of a clique of maximum size $l$, after atmost $l$ iterations the matrix $\mathbf{B}$ becomes zero. Therefore the computational complexity of finding all $m$-cliques is less than $ln^2$ binary $AND$ operations.

For the given graph $G$, Algorithm \ref{algo1} finds all cliques of size two and combines them in the first iteration. Similarly, in $l$th iteration, Algorithm \ref{algo1} finds and combines all cliques of size two in the graph corresponding to the adjacency matrix obtained from $(l-1)$th iteration. However, Algorithm \ref{algo1} can be modified to combine only one clique of size two in every iteration. This might yield less number of cliques in some side-information graphs at the cost of more iterations.

\begin{example}
For the side-information graph given in Fig. \ref{fig21}, in the first iteration, Algorithm \ref{algo1} finds all the cliques of size two and merges the cliques $\{4,5\}$ and $\{1,2\}$. In the second iteration, Algorithm \ref{algo1} combines the cliques $\{1,2\}$ and $\{3\}$. After second iteration, $\mathbf{B}$ is zero and Algorithm \ref{algo1} outputs the cliques $\{1,2,3\}$ and $\{4,5\}$.
\end{example}

\end{appendix}

\end{document}